\theoremstyle{nicetheorem}
\newtheorem{theorem}{Theorem}[section]
\newtheorem*{theorem*}{Theorem}
\newtheorem*{lemma*}{Lemma}
\newaliascnt{lemma}{theorem}
\newtheorem{lemma}[lemma]{Lemma}
\crefname{lemma}{Lemma}{Lemmas}
\newaliascnt{problem}{theorem}
\crefname{problem}{Problem}{Problems}
\newaliascnt{claim}{theorem}
\newtheorem{claim}[claim]{Claim}
\crefname{claim}{Claim}{Claims}
\newaliascnt{corollary}{theorem}
\newtheorem{corollary}[corollary]{Corollary}
\crefname{corollary}{Corollary}{Corollaries}
\newaliascnt{construction}{theorem}
\crefname{construction}{Construction}{Constructions}
\newaliascnt{fact}{theorem}
\crefname{fact}{Fact}{Facts}
\newaliascnt{proposition}{theorem}
\crefname{proposition}{Proposition}{Propositions}
\newaliascnt{conjecture}{theorem}
\crefname{conjecture}{Conjecture}{Conjectures}
\newaliascnt{definition}{theorem}
\newtheorem{definition}[definition]{Definition}
\crefname{definition}{Definition}{Definitions}
\newaliascnt{remark}{theorem}
\crefname{remark}{Remark}{Remarks}
\newaliascnt{observation}{theorem}
\crefname{observation}{Observation}{Observations}
\crefname{algorithm}{Algorithm}{Algorithms}
\newcommand{\abs}[1]{\left|#1\right|}
\newcommand\E{\mathop{\mathbb E}}
\newcommand{\U}{\mathbf U}
\newcommand{\N}{\mathbb N}
\newcommand{\R}{\mathbb R}
\newcommand{\B}{\{ 0,1 \}}
\newcommand{\BPP}{{\mathsf {BPP}}}
\newcommand{\NP}{{\mathsf {NP}}}
\newcommand{\NC}{{\mathsf {NC}}}
\newcommand{\AM}{{\mathsf {AM}}}
\newcommand{\coAM}{{\textsf{co}\AM}}
\newcommand{\MA}{{\mathsf {MA}}}
\newcommand{\mNP}{\mathsf{m}{\NP}}
\newcommand{\coNP}{{\textsf{co}\NP}}
\newcommand{\ioBPP}{{\textsf{io-}\BPP}}
\newcommand{\PP}{{\mathsf {P}}}
\newcommand{\eps}{\varepsilon}
\newcommand{\CNF}{3\mathsf{CNF}}
\newcommand{\io}{{i\mathcal{O}}}
\newcommand{\negl}{{\mathsf{neg}}}
\newcommand{\secret}{{S}}
\newcommand{\RECON}{{\mathsf{RECON}}}
\newcommand{\SETUP}{{\mathsf{SETUP}}}
\newcommand{\Sc}{{\mathcal{S}}}
\newcommand{\Com}{{\mathsf{Com}}}
\newcommand{\com}{{\mathsf{c}}}
\newcommand{\shares}{{\Pi}}
\newcommand{\share}{\shares}
\newcommand{\parties}{{\mathcal P}}
\newcommand{\party}{{\mathsf p}}
\newcommand{\ver}{{V}}
\newcommand{\sop}{{X}} %subset of parties
\newcommand{\Samp}{{\mathsf{Samp}}}
\newcommand{\resd}{{\mathsf{resD}}}
\newcommand{\resm}{{\mathsf{bias}}}
\newcommand{\bias}{{\mathsf{bias}}}
\newcommand{\Mest}{\mathsf{B}}
\newcommand{\Dver}{\mathsf{D}_{\mathsf{ver}}}
\newcommand{\myand}{{\;\wedge\;}}
\newcommand{\ct}{\mathsf{ct}}
\newcommand{\Encrypt}{\mathsf{Encrypt}}
\newcommand{\Decrypt}{\mathsf{Decrypt}}
\newcommand{\poly}{\mathsf{poly}}
\newcommand{\mono}{\mathsf{mono}}
\newcommand{\crs}{\mathsf{crs}}
\def\({\left(}
\def\){\right)}
\makeatletter \renewenvironment{proof}[1][\proofname]{%
  \par\pushQED{\qed}\normalfont%
  \topsep6\p@\@plus6\p@\relax
  \trivlist\item[\hskip\labelsep\bfseries#1\@addpunct{.}]%
  \ignorespaces }{%
  \popQED\endtrivlist\@endpefalse } \makeatother
\newtheorem*{rep@theorem}{\rep@title}
\newcommand{\newreptheorem}[2]{%
  \newenvironment{rep#1}[1]{%
    \def\rep@title{#2 \ref{##1}}%
    \begin{rep@theorem}}%
    {\end{rep@theorem}}}
\newtheorem*{rep@lemma}{\rep@title}
\newcommand{\newreplemma}[2]{%
  \newenvironment{rep#1}[1]{%
    \def\rep@title{#2 \ref{##1}}%
    \begin{rep@lemma}}%
    {\end{rep@lemma}}}
\newcommand{\aka} {also known as\ }
\newcommand{\resp}{resp.,\ }
\newcommand{\ie}  {i.e.,\ }
\newcommand{\eg}  {e.g.,\ }
\newcommand{\etal}{{et~al.\ }}
\newcommand{\etalcite}[1]{{et~al.~\cite{#1}}}
\newcommand{\uniran}{{\stackrel{\mathsf{R}}{\leftarrow}}}
\newcommand{\BAD}{\mathsf{BAD}}
\newcommand{\Prp}[1]{\Pr\left[#1\right]}
\title{Secret-Sharing for $\NP$}
\author{Ilan Komargodski\footnote{Weizmann Institute of Science. Email: {\tt
      \{ilan.komargodski,moni.naor,eylon.yogev\}@weizmann.ac.il}. Research
    supported in part by a grant from the Israel Science Foundation, the I-CORE
    Program of the Planning and Budgeting Committee, BSF and the Israeli
    Ministry of Science and Technology. Moni Naor is the incumbent of the Judith
    Kleeman Professorial Chair.}%\\
  %Weizmann Institute of Science\\
  %{\tt ilan.komargodski@weizmann.ac.il}
  \and Moni Naor\footnotemark[1]%\\
  %Weizmann Institute of Science\\
  %{\tt moni.naor@weizmann.ac.il}
  \and
  Eylon Yogev\footnotemark[1]%\\
%  Weizmann Institute of Science\\
%  {\tt eylon.yogev@weizmann.ac.il}
}
\date{}
\begin{document}

\maketitle

\begin{abstract}
  A computational secret-sharing scheme is a method that enables a dealer, that
  has a secret, to distribute this secret among a set of parties such that a
  ``qualified'' subset of parties can efficiently reconstruct the secret while
  any ``unqualified'' subset of parties cannot efficiently learn anything
  about the secret. The collection of ``qualified'' subsets is defined by a
  monotone Boolean function.

  It has been a major open problem to understand which (monotone) functions can
  be realized by a computational secret-sharing scheme. Yao suggested a method
  for secret-sharing for any function that has a polynomial-size monotone
  circuit (a class which is strictly smaller than the class of monotone
  functions in $\PP$). Around 1990
  Rudich raised the possibility of obtaining secret-sharing for all monotone
  functions in $\NP$: In order to reconstruct the secret a set of parties must
  be ``qualified'' and provide a witness attesting to this fact.

  Recently, Garg~\etal (STOC 2013) put forward the concept of
  \textsf{witness encryption}, where the goal is to encrypt a message relative
  to a statement $x\in L$ for a language $L\in\NP$ such that anyone holding a
  witness to the statement can decrypt the message, however if $x\notin L$,
  then it is computationally hard to decrypt. Garg~\etal showed how to construct
  several cryptographic primitives from witness encryption and gave a candidate
  construction.

  One can show that computational secret-sharing implies witness encryption for
  the same language. Our main result is the converse: we give a construction of
  a computational secret-sharing scheme for \emph{any} monotone function in
  $\NP$ assuming witness encryption for $\NP$ and one-way functions. As a
  consequence we get a completeness theorem for secret-sharing: computational
  secret-sharing scheme for any \emph{single} monotone $\NP$-complete function
  implies a computational secret-sharing scheme for \emph{every} monotone
  function in $\NP$.
\end{abstract}

\newpage
\tableofcontents
\newpage

\section{Introduction}

A \textsf{secret-sharing scheme} is a method that enables a dealer, that has a
secret piece of information, to distribute this secret among $n$ parties such
that a ``qualified'' subset of parties has enough information to reconstruct the
secret while any ``unqualified'' subset of parties learns nothing about the
secret. A monotone collection of ``qualified'' subsets (\ie subsets of parties
that can reconstruct the secret) is known as an \textsf{access structure}, and
is usually identified with its characteristic monotone function.\footnote{It is
  most sensible to consider only \emph{monotone} sets of ``qualified'' subsets
  of parties. A set $M$ of subsets is called monotone if $A\in M$ and
  $A\subseteq A'$, then $A'\in M$. It is hard to imagine a meaningful method for
  sharing a secret to a set of ``qualified'' subsets that does not satisfy this
  property.}  Besides being interesting in their own right, secret-sharing
schemes are an important building block in many cryptographic protocols,
especially those involving some notion of ``qualified'' sets (\eg multi-party
computation, threshold cryptography and Byzantine agreement).  For more
information we refer to the extensive survey of Beimel on secret-sharing schemes
and their applications \cite{Beimel11}.

A significant goal in constructing secret-sharing schemes is to \emph{minimize}
the amount of information distributed to the parties.  We say that a
secret-sharing scheme is \emph{efficient} if the size of all shares is
polynomial in the number of parties and the size of the secret.

Secret-sharing schemes were introduced in the late 1970s by Blakley
\cite{Blakley79} and Shamir \cite{Shamir79} for the {\em threshold access
  structure}, \ie where the subsets that can reconstruct the secret are all the
sets whose cardinality is at least a certain threshold. Their constructions were
fairly efficient both in the size of the shares and in the computation required
for sharing and reconstruction. Ito, Saito and Nishizeki \cite{ISN93} considered
general access structures and showed that every monotone access structure has a
(possibly \emph{inefficient}) secret-sharing scheme that realizes it. In their
scheme the size of the shares is proportional to the DNF (resp.\ CNF) formula
size of the corresponding function. Benaloh and Leichter \cite{Leichter88}
proved that if an access structure can be described by a polynomial-size
monotone \emph{formula}, then it has an efficient secret-sharing scheme. The
most general class for which secret-sharing is known was suggested by Karchmer
and Wigderson~\cite{KarchmerW93} who showed that if the access structure can be
described by a polynomial-size monotone {\em span program} (for instance,
undirected connectivity in a graph), then it has an efficient secret-sharing
scheme. Beimel and Ishai~\cite{BeimelI05} proposed a secret-sharing scheme for
an access structure which is conjectured to lie outside $\NC$. On the other
hand, there are no known lower bounds that show that there exists an access
structure that requires only inefficient secret-sharing
schemes.\footnote{Moreover, there are not even non-constructive lower bounds for
  secret-sharing schemes. The usual counting arguments (\eg arguments that show
  that most functions require large circuits) do not work here since one needs
  to enumerate over the sharing and reconstruction algorithms whose complexity
  may be larger than the share size.}

\paragraph{Computational Secret-Sharing.}
In the secret-sharing schemes considered above the security is guaranteed
information theoretically, that is, even if the parties are computationally
unbounded. These secret-sharing schemes are known as \textsf{perfect
  secret-sharing schemes}. A natural variant, known as \textsf{computational
  secret-sharing schemes}, is to allow only computationally limited dealers and
parties, \ie they are probabilistic algorithms that run in polynomial-time.
More precisely, a computational secret-sharing scheme is a secret-sharing scheme
in which there exists an \emph{efficient} dealer that generates the shares such
that a ``qualified'' subset of parties can \emph{efficiently} reconstruct the
secret, however, an ``unqualified'' subset that pulls its shares together but
has only limited (\ie polynomial) computational power and attempts to
reconstruct the secret should fail (with high
probability). Krawczyk~\cite{Krawczyk93} presented a computational
secret-sharing scheme for threshold access structures that is more efficient (in
terms of the size of the shares) than the perfect secret-sharing schemes given
by Blakley and Shamir~\cite{Blakley79,Shamir79}. In an unpublished work
(mentioned in \cite{Beimel11}, see also Vinod~\etalcite{VinodNSRK03}), Yao
showed an efficient computational secret-sharing scheme for access structures
whose characteristic function can be computed by a polynomial-size monotone
\emph{circuit} (as opposed to the \emph{perfect} secret-sharing of Benaloch and
Leichter~\cite{Leichter88} for polynomial-size monotone {\em formulas}). Yao's
construction assumes the existence of pseudorandom generators, which can be
constructed from any one-way function \cite{HastadILL99}. There are access
structures which are known to have an efficient \emph{computational}
secret-sharing schemes but are not known to have efficient \emph{perfect}
secret-sharing schemes, \eg directed connectivity.\footnote{In the access
  structure for directed connectivity, the parties correspond to edge slots in
  the complete \emph{directed} graph and the ``qualified'' subsets are those
  edges that connect two distinguished nodes $s$ and $t$.} % It is known that
% directed connectivity has a polynomial-size monotone circuit of depth $O(\log
% n)$; this circuit has unbounded fan-in (that is, Yao's scheme gives an
% efficient computational secret-sharing scheme). This implies a monotone
% formula of size $n^{O(\log n)}$ (that is, Benaloh and Leichter's scheme
% \cite{Leichter88} gives an \emph{inefficient} perferct secret-sharing
% scheme).
Yao's scheme does not include all monotone access structures with an efficient
algorithm to determine eligibility. One notable example where no efficient
secret-sharing is known is matching in a graph.\footnote{\label{fn:matching}In
  the access structure for matching the parties correspond to edge slots in the
  complete graph and the ``qualified'' subsets are those edges that
  \emph{contain} a perfect matching. Even though matching is in $\PP$, it is
  known that there is no monotone circuit that computes it \cite{Razborov85}.}
Thus, a major open problem is to answer the following question:

\begin{center}
\emph{Which access structures have
efficient computational secret-sharing schemes, and what cryptographic
assumptions are required for that?}
\end{center}

\paragraph{Secret-Sharing for $\NP$.}
Around 1990 Steven Rudich raised the possibility of obtaining secret-sharing
schemes for an even more general class of access structures than $\PP$: monotone
functions in $\NP$, \aka $\mNP$.\footnote{Rudich raised it in private
  communication with the second author around 1990 and was not written to the
  best of our knowledge; some of Rudich's results can be found
  in Beimel's survey~\cite{Beimel11} and in Naor's
  presentation~\cite{NaorSlides}.} An access structure that is defined by a
function in $\mNP$ is called an $\mNP$ access structure. Intuitively, a
secret-sharing scheme for an $\mNP$ access structure is defined (in the natural
way) as following: for the ``qualified'' subsets there is a witness attesting to
this fact and \emph{given} the witness it should be possible to reconstruct the
secret. On the other hand, for the ``unqualified'' subsets there is no witness,
and so it should not be possible to reconstruct the secret.  For example,
consider the Hamiltonian access structure. In this access structure the parties
correspond to edges of the complete undirected graph, and a set of parties
$\sop$ is said to be ``qualified'' if and only if the corresponding set of edges
contains a Hamiltonian cycle and the set of parties knows a witness attesting to
this fact.

Rudich observed that if $\NP\neq \coNP$, then there is no \emph{perfect}
secret-sharing scheme for the Hamiltonian access structure in which the sharing
of the secret can be done efficiently (\ie in
polynomial-time).\footnote{Moreover, it is possible to show that if
  $\NP\not\subseteq \coAM$, then there is no \emph{statistical} secret-sharing
  scheme for the Hamiltonian access structure in which the sharing of the secret
  can be done efficiently \cite{NaorSlides}.} This (conditional) impossibility
result motivates looking for \emph{computational} secret-sharing schemes for the
Hamiltonian access structure and other $\mNP$ access structures. Furthermore,
Rudich showed that the construction of a computational secret-sharing schemes
for the Hamiltonian access structure gives rise to a protocol for oblivious
transfer. More precisely, Rudich showed that if one-way functions exist and
there is a \emph{computational} secret-sharing scheme for the Hamiltonian access
structure (\ie with efficient sharing and reconstruction), then efficient
protocols for oblivious transfer exist.\footnote{The resulting reduction is
  \emph{non}-black-box. Also, note that the results of Rudich apply for any
  other monotone $\NP$-complete problem as well.} In particular, constructing a
computational secret-sharing scheme for the Hamiltonian access structure
assuming one-way functions will resolve a major open problem in cryptography and
prove that Minicrypt$=$Cryptomania, to use Impagliazzo's terminology
\cite{Impagliazzo95}.

In the decades since Rudich raised the possibility of access structures beyond
$\PP$ not much has happened. This changed with the work on \textsf{witness
  encryption} by Garg~\etalcite{GargGSW13}, where the goal is to encrypt a
message relative to a statement $x\in L$ for a language $L\in\NP$ such that:
Anyone holding a witness to the statement can decrypt the message, however, if
$x\notin L$, then it is computationally hard to decrypt. Garg~\etal showed how
to construct several cryptographic primitives from witness encryption and gave
a candidate construction.

A by-product of the proposed construction of Garg~\etal was a construction of a
computational secret-sharing scheme for a \emph{specific} monotone
$\NP$-complete language. However, understanding whether one can use a
secret-sharing scheme for any single (monotone) $\NP$-complete language in order
to achieve secret-sharing schemes for any language in $\mNP$ was an open
problem. One of our main results is a positive answer to this question. Details follow.

%It is
%relatively simple to show that secret-sharing for an $\NP$ language $L$ implies
%witness encryption for the statement $x \in L$ (see
%Garg~et~al.).  

\paragraph{Our Results.}
In this paper, we construct a secret-sharing scheme for \emph{every} $\mNP$
access structure assuming witness encryption for $\NP$ and one-way functions. In
addition, we give two variants of a formal definition for secret-sharing for
$\mNP$ access structures (indistinguishability and semantic security) and prove
their equivalence.

\begin{theorem}\label{thm:main}
  Assuming witness encryption for $\NP$ and one-way functions, there is an
  efficient computational secret-sharing scheme for every $\mNP$ access
  structure.
\end{theorem}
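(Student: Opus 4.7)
The plan is to design a scheme in which each party's share is an opening of a publicly posted commitment, and the secret is locked inside a witness-encryption ciphertext whose NP statement can only be satisfied by a qualified coalition that pools its members' openings with an NP-witness for $f$. Given $f\in\mNP$ with polynomial-time verifier $V(X,w)$ and secret $\secret$, the sharing algorithm samples random $r_1,\dots,r_n$ and opening randomness $d_1,\dots,d_n$, publishes $c_i=\Com(r_i;d_i)$, and computes a witness-encryption ciphertext $\ct$ of $\secret$ under the NP instance $(c_1,\dots,c_n)$ of
\[
L^\star=\Bigl\{(c_1,\dots,c_n):\exists\,X,w,\{(r_i,d_i)\}_{i\in X}\text{ with }V(X,w)=1\text{ and }c_i=\Com(r_i;d_i)\text{ for all }i\in X\Bigr\},
\]
handing $(r_i,d_i)$ to party $i$. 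Correctness is immediate: a qualified coalition $X$ combines its openings with any witness $w$ satisfying $V(X,w)=1$ and feeds the resulting $L^\star$-witness to $\Decrypt$ to recover $\secret$.

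For security, I would fix an unqualified coalition $X^A$ (so $f(X^A)=0$) and proceed by a hybrid argument. First, using the hiding property of $\Com$ (obtainable from one-way functions via standard constructions), I would indistinguishably replace each $c_j$ for $j\notin X^A$ by a commitment to a fixed dummy value, one party at a time, with adjacent hybrids linked by a reduction to single-commitment hiding. After the switch, the commitments outside $X^A$ carry no information about the original $r_j$'s, so the coalition cannot extract an opening of any $c_j$ with $j\notin X^A$; since $f(X^A)=0$ forces every $L^\star$-witness to touch at least one index outside $X^A$, the coalition is effectively prevented from producing a witness. Finally, I would invoke the security of witness encryption to conclude that $\ct$ hides $\secret$. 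The equivalence between the indistinguishability and semantic-security formulations of secret sharing (stated separately in the paper) should follow by a standard simulator argument that mirrors the analogous equivalence for encryption and is largely bookkeeping.

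The main obstacle is the last step of the security reduction. Witness-encryption security only gives secrecy when the NP statement is \emph{false}, yet $(c_1,\dots,c_n)\in L^\star$ remains literally true as long as the access structure contains any qualified set (even one disjoint from $X^A$), since appropriate openings exist in principle. Closing this gap is the technical heart of the construction: one must either craft $L^\star$ so that after the commitment switch the instance becomes \emph{computationally} indistinguishable from a false one---for example, by binding the NP-witness to coalition-specific data that a polynomially bounded adversary cannot produce without inverting $\Com$---or exploit an additional structural property of $\Com$ and of the witness-encryption scheme that converts hardness of witness-finding into semantic secrecy. The remaining steps of the proof---the commitment hybrid and the definitional equivalence---are routine once this piece is in place.
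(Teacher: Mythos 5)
You have the right architecture (commit, lock the secret with witness encryption over an instance built from the commitments, hand each party an opening), and you correctly identify the fatal obstacle in your own version: after swapping the commitments outside $X^A$ for commitments to a dummy value, the instance $(c_1,\dots,c_n)$ is still \emph{true} in $L^\star$, because the dummy commitments do have openings. Witness-encryption security applies only to false instances, so the reduction stalls exactly where you say it does. This is a genuine gap, not bookkeeping.

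The paper closes it with one small but decisive change: instead of committing to a fresh random value $r_i$, the dealer commits to the \emph{party index} $i$ itself, setting $\com_i = \Com(i, r_i)$, and the language $M'$ demands $\Com(i, r_i) = \com_i$ coordinate-by-coordinate (the bit $x_i$ can be turned on only by an opening of $\com_i$ to the specific value $i$). In the security reduction one replaces, for $j\notin\sop$, each $\com_j$ by a commitment to the \emph{different} value $n+j$. By perfect binding, $\Com(j,\cdot)$ and $\Com(n+j,\cdot)$ have disjoint supports, so no opening of the swapped $\com_j$ to $j$ exists at all; hence only indices in $\sop$ can ever be activated, and since $M$ is monotone and $M(\sop)=0$, the resulting instance is literally \emph{false} in $M'$. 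That converts hardness of witness-finding into the unconditional nonexistence of a witness, which is exactly what witness-encryption security needs. Two further points your sketch does not address: the paper's security definition is \emph{uniform} (a sampler $\Samp$ outputs $\sop$, and one cannot efficiently test whether $M(\sop)=0$ for an $\mNP$ access structure), so the reduction must repeatedly sample and run an empirical bias-estimation subroutine $\Mest$ to locate a useful iteration; and monotonicity of $M$ is used essentially in the falsity argument, since a purported witness activating $\sop'\subseteq\sop$ must fail because $M(\sop')=0$.
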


We remark that if we relax the requirement of computational secret-sharing such
that a ``qualified'' subset of parties can reconstruct the secret with very high
probability (say, negligibly close to 1), then our scheme from
\Cref{thm:main} actually gives a secret-sharing scheme for every monotone
functions in $\MA$.

As a corollary, using the fact that a secret-sharing scheme for a language
implies witness encryption for that language and using the completeness of
witness encryption,\footnote{\label{fn:comp_we}Using standard Karp/Levin
  reductions between $\NP$-complete languages, one can transform a witness
  encryption scheme for a single $\NP$-complete language to a witness encryption
  scheme for any other language in $\NP$.} we obtain a completeness theorem for
secret-sharing.
\begin{corollary}[Completeness of Secret-Sharing]\label{cor:main}
  Let $L$ be a monotone language that is $\NP$-complete (under Karp/Levin
  reductions) and assume that one-way functions exist. If there exists a
  computational secret-sharing scheme for the access structure defined by $L$,
  then there are computational secret-sharing schemes for \emph{every} $\mNP$
  access structure.
\end{corollary}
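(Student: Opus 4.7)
The plan is to chain three implications. First, one shows that the assumed computational secret-sharing scheme for $L$ yields a witness encryption scheme for $L$. Second, since $L$ is monotone $\NP$-complete under Karp/Levin reductions, witness encryption for $L$ upgrades to witness encryption for every language in $\NP$ by transporting ciphertexts along the reduction together with its witness-mapping inverse (this is the completeness remark in footnote~\ref{fn:comp_we}). Finally, feeding this witness encryption together with the assumed one-way functions into \Cref{thm:main} yields a computational secret-sharing scheme for every $\mNP$ access structure.

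The only step that requires content is the first. The access structure associated with $L$ has a canonical indexing of parties (\eg edge slots of $K_n$ for a graph language), and a subset $\sop$ of parties is qualified precisely when $\sop\in L$, with an $\NP$-witness attesting to this. To encrypt a bit $m$ relative to a statement $x$, I would run the dealer of the secret-sharing scheme on $m$ to produce shares $\{\share_i\}_{i\in\parties}$ and output the ciphertext $\ct = \{\share_i\}_{i\in x}$, \ie reveal only the shares corresponding to coordinates where $x$ is $1$. A decryptor who holds an $\NP$-witness $w$ for $x\in L$ knows that $x$ itself is a qualified set with witness $w$, so running the reconstruction algorithm on $\{\share_i\}_{i\in x}$ with $w$ efficiently recovers $m$. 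Conversely, if $x\notin L$, then by monotonicity no subset of $x$ is qualified either, so the released shares form an unqualified collection; the security of the secret-sharing scheme then implies that no $\poly$-time adversary can distinguish encryptions of $0$ from encryptions of $1$, which is exactly the soundness guarantee of witness encryption.

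The only mild obstacle is notational bookkeeping between the two formalisms: the secret-sharing scheme is defined for a fixed set of parties parameterized by some size $n$, whereas witness encryption is phrased over arbitrary instances $x\in\{0,1\}^*$. One must therefore apply the construction above to the length-$n$ slice of the access structure for each $n=|x|$ and handle short instances by padding. This is routine and carries no cryptographic content, so once step~1 is in place, steps~2 and~3 combine immediately to prove the corollary.
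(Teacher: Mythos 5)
Your proposal is correct and follows exactly the paper's intended route: convert the secret-sharing scheme for $L$ into a witness encryption scheme for $L$, upgrade to witness encryption for all of $\NP$ via the Karp/Levin reduction (footnote~\ref{fn:comp_we}), and then invoke \Cref{thm:main} together with one-way functions. The paper merely asserts the first step; you have usefully spelled it out (revealing the shares indexed by the $1$-coordinates of $x$, reconstructing with the $\NP$-witness, and relying on monotonicity plus the unqualified-set indistinguishability guarantee when $x\notin L$), which matches the standard conversion the authors had in mind.
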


\subsection{On Witness Encryption and Its Relation to Obfuscation}
Witness encryption was introduced by Garg~\etalcite{GargGSW13}. They gave a
formal definition and showed how witness encryption can be combined with other
cryptographic primitives to construct public-key encryption (with efficient key
generation), identity-based encryption and attribute-based encryption. Lastly,
Garg~\etal presented a candidate construction of a witness encryption scheme
which they assumed to be secure. In a more recent work, a new construction of a
witness encryption scheme was proposed by Gentry, Lewko and Waters
\cite{GentryLW14}.

Shortly after the paper of Garg~\etalcite{GargGSW13} a candidate construction of
\textsf{indistinguishability obfuscation} was proposed by
Garg~\etalcite{GargGH0SW13}. An indistinguishability obfuscator is an algorithm
that guarantees that if two circuits compute the same function, then their
obfuscations are computationally indistinguishable. The notion of
indistinguishability obfuscation was originally proposed in the seminal work of
Barak~\etalcite{BarakGIRSVY01,BarakGIRSVY12}.

Recently, there have been two significant developments regarding
indistinguishability obfuscation: first, candidate constructions for obfuscators
for all polynomial-time programs were proposed
\cite{GargGH0SW13,BrakerskiR14,BarakGKPS14,PassST14,GentryLSW14} and second,
intriguing applications of indistinguishability obfuscation when combined with
other cryptographic primitives\footnote{See \cite{KomargodskiMNPRY14} for a
  thorough discussion of the need in additional hardness assumptions on top of
  $\io$.} have been demonstrated (see, \eg
\cite{GargGH0SW13,SahaiW14,BonehZ14}).

As shown by Garg~\etalcite{GargGH0SW13}, indistinguishability obfuscation
implies witness encryption for all $\NP$, which, as we show in
\Cref{thm:main}, implies secret-sharing for all $\mNP$. In fact, using the
completeness of witness encryption (see Footnote~\ref{fn:comp_we}), even an
indistinguishability obfuscator for $\CNF$ formulas (for which there is a
simple candidate construction \cite{BrakerskiR14a}) implies
witness encryption for all $\NP$.  Understanding whether witness encryption is
strictly weaker than indistinguishability obfuscation is an important open
problem.

A summary of the known relations between the above mentioned objects can be
found in \Cref{fig:zoo}.

\subsection{Other Related Work}
A different model of secret-sharing for $\mNP$ access structures was suggested
by Vinod~\etalcite{VinodNSRK03}. Specifically, they relaxed the requirements
of secret-sharing by introducing a semi-trusted third party $T$ who is allowed
to interact with the dealer and the parties. They require that $T$ does not
learn anything about the secret and the participating parties. In this model,
they constructed an efficient secret-sharing scheme for any $\mNP$ access
structures (that is also efficient in terms of the round complexity of the
parties with $T$) assuming the existence of efficient oblivious transfer
protocols.

\subsection{Main Idea}
Let $\Com$ be a perfectly-binding commitment scheme. Let $M \in \mNP$ be an
access structure on $n$ parties $\parties=\{\party_1,\dots,\party_n\}$. Define
$M'$ to be the $\NP$ language that consists of sets of $n$ strings
$\com_1,\dots,\com_n$ as follows. $M'(\com_1,\dots,\com_n) = 1$ if and only if
there exist $r_1,\dots,r_n$ such that $M(x) = 1$, where $x=x_1\dots x_n$ is such
that
\begin{align*}
  \forall i\in[n]:\; \quad x_i =
  \begin{cases}
    1 & \text{if } r_i \neq \bot \text{ and }\Com(i, r_i) =
    \com_i,\\
    0 & \text{otherwise.}
  \end{cases}
\end{align*}

For the language $M'$ denote by $(\Encrypt_{M'}, \Decrypt_{M'})$ the witness
encryption scheme for $M'$. A secret-sharing scheme for the access structure 
$M$ consists of a setup phase in which the dealer distributes secret shares to 
the parties. First, the dealer samples uniformly at random $n$ openings 
$r_1,\dots,r_n$. Then, the dealer computes a witness encryption $\ct$ of the 
message $\secret$ with respect to the instance $\(\com_1= 
\Com(1,r_1),\dots,\com_n=\Com(n,r_n)\)$ of the language $M'$, namely $\ct = 
\Encrypt_{M'}((\com_1,\dots,\com_n), \secret)$. Finally, the share of party 
$\party_i$ is set to be $\langle r_1,\ct \rangle$.

Clearly, if $\Encrypt_{M'}$ and $\Com$ are efficient, then the generation of 
the
shares is efficient. Moreover, the reconstruction procedure is the natural one:
Given a subset of parties $\sop\subseteq \parties$ such that $M(\sop)=1$ and a
valid witness $w$, decrypt $\ct$ using the shares of the parties $\sop$ and
$w$. By the completeness of the witness encryption scheme, given a valid subset
of parties $\sop$ and a valid witness $w$ the decryption will output the secret
$\secret$.

As for the \emph{security} of this scheme, we want to show that it is impossible
to extract (or even learn anything about) the secret having a subset of parties
$\sop$ for which $M(\sop)=0$ (\ie an ``unqualified'' subset of parties). Let 
$\sop$ be such that $M(X)=0$ and let $D$
be an algorithm that extracts the 
secret given the shares of parties corresponding to $\sop$. Roughly speaking, 
we will
use the ability to extract the secret in order to solve the following task: we
are given a list of $n$ unopened string commitments $\com_1,\dots,\com_n$ and a
promise that it either corresponds to the values $A_0=\{1,\dots,n\}$ or it
corresponds to the values $A_1= \{n+1,\dots,2n\}$ and we need to decide which is
the case. Succeeding in this task would break the security guarantee of the
commitment scheme.

We sample $n$ openings $r_1,\dots,r_n$ uniformly at random and create a new
witness encryption $\ct'$ such that $\ct' =
\Encrypt_{M'}((\com'_{1},\dots,\com'_{n}), \secret)$ as above, where we replace
the commitments corresponding to parties not in $\sop$ with commitments from the
input as follows:
\begin{align*}
  \forall i\in[n]:\; \com'_i =
  \begin{cases}
    \Com(i,r_i) & \text{if } \party_i \in \sop \\
    \com_i & \text{otherwise.}
  \end{cases}
\end{align*}
For $i\in[n]$ we set the share of party $\party_i$ to be $\langle r_i,\ct'
\rangle$. We run $D$ with this new set of shares. If we are in the case where
$\com_1,\dots,\com_n$ corresponds to $A_0$, then $D$ is unable to distinguish
between $\ct$ and $\ct'$ and, hence, will be able to extract the secret. On the
other hand, if $\com_1,\dots,\com_n$ corresponds to $A_1$, then there is no
valid witness to decrypt $\ct'$ (since the commitment scheme is
perfectly-binding). Therefore, by the security of the witness encryption scheme,
it is computationally hard to learn anything about the secret $\secret$ from
$\ct'$. Hence, if $D$ is able to extract the secret $\secret$, then we deduce
that $\com_1,\dots,\com_n$ correspond to $A_0$ and, otherwise we conclude that
$\com_1,\dots,\com_n$ correspond to $A_1$.

The above gives intuition for proving security in the non-uniform setting. 
To see this, we assume that there exists an $\sop$ such that $M(\sop)=0$ and 
the 
distinguisher $D$ can extract the secret from the shares of $\sop$. Our 
security definition (see \Cref{sec:RudichSecretSharingDefs}) is uniform and 
requires the distinguisher $D$ to find such 
an $\sop$ \emph{and} extract the secret with noticeable probability. In the 
uniform 
case, we first run $D$ to get $\sop$ and must make sure that $M(\sop)=0$. 
Otherwise, if $M(\sop)=1$, in both cases (that $\com_1,\dots,\com_n$ 
correspond to $A_0$ or to $A_1$) it is easy to extract the secret and thus we 
might be completely fooled. The problem is that $M$ is a language in $\mNP$ 
and, in general, it could be hard to test whether $M(\sop)=0$. We overcome 
this 
by sampling many subsets $\sop$ and use $D$ to estimate which one 
to use. For more information we refer to \Cref{sec:security_proof}.

%%% Local Variables:
%%% TeX-master: "InstanceOptimal"
%%% End:
\section{Preliminaries}
We start with some general notation. We denote by $[n]$ the set of numbers
$\{1,2,\dots,n\}$. Throughout the paper we use $n$ as our security parameter.
We denote by $\U_n$ the uniform distribution on $n$ bits. For a distribution or
random variable $R$ we write $r\leftarrow R$ to denote the operation of sampling
a random element $r$ according to $R$. For a set $S$, we write $s \uniran S$ to
denote the operation of sampling an $s$ uniformly at random from the set $S$. We
denote by $\negl:\N\to\R$ a function such that for every positive integer $c$
there exists an integer $N_c$ such that for all $n > N_c$, $\negl(n) < 1/n^c$.
\subsection{Monotone $\NP$}
A function $f:2^{[n]}\to\B$ is said to be \textsf{monotone} if for every
$X\subseteq [n]$ such that $f(X)=1$ it also holds that $\forall Y\subseteq [n]$
such that $X\subseteq Y$ it holds that $f(Y)=1$.

A \textsf{monotone Boolean circuits} is a Boolean circuit with AND and OR gates
(without negations). A \textsf{non-deterministic circuit} is a Boolean circuit
whose inputs are divided into two parts: standard inputs and non-deterministic
inputs. A non-deterministic circuit accepts a standard input if and only if
there is some setting of the non-deterministic input that causes the circuit to
evaluate to 1. A \textsf{monotone non-deterministic circuit} is a
non-deterministic circuit where the monotonicity requirement applies only to the
standard inputs, that is, every path from a standard input wire to the output
wire does not have a negation gate.

\begin{definition}[\cite{GrigniS90}]\label{def:mNP1}
  We say that a function $L$ is in $\mNP$ if there exists a uniform family of
  polynomial-size monotone non-deterministic circuit that computes $L$.
\end{definition}

\begin{lemma}[{\cite[Theorem 2.2]{GrigniS90}}]\label{def:mNP2}
  $\mNP = \NP \cap \mono$, where $\mono$ is the set of all monotone
  functions.
\end{lemma}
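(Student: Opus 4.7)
The plan is to prove both inclusions separately, with the harder direction being $\NP \cap \mono \subseteq \mNP$.

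For the easy direction, $\mNP \subseteq \NP \cap \mono$, I would argue as follows. Any monotone non-deterministic circuit is in particular a non-deterministic circuit, so the function it computes lies in $\NP$. For monotonicity in the standard inputs, I would induct on the depth of the circuit to show that for every fixed assignment of non-deterministic inputs, the value of every gate is a monotone function of the standard inputs: since no path from a standard input to the output passes through a negation, each gate value is built from AND/OR of monotone functions of the standard inputs (and constants depending on the non-deterministic inputs), which is itself monotone. Taking the OR over all settings of the non-deterministic inputs preserves monotonicity.

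For the hard direction, $\NP \cap \mono \subseteq \mNP$, I would use the following construction. Let $L \in \NP \cap \mono$ and let $C(x, w)$ be a uniform family of polynomial-size non-deterministic circuits (possibly with negations on the $x_i$'s) such that $x \in L$ iff $\exists w\colon C(x,w)=1$. Define a new non-deterministic circuit $C'$ on standard input $x \in \{0,1\}^n$ with non-deterministic inputs $(y, w) \in \{0,1\}^n \times \{0,1\}^*$ by
\begin{equation*}
  C'(x; y, w) \;=\; C(y, w) \;\wedge\; \bigwedge_{i=1}^n \bigl(\neg y_i \vee x_i\bigr).
\end{equation*}
Each standard input $x_i$ appears only positively in $C'$, and the negations land on the non-deterministic inputs $y_i$, so $C'$ is a monotone non-deterministic circuit in the sense of the definition. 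It is of polynomial size and uniformly constructible from $C$.

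For correctness: if $x \in L$ there exists $w$ with $C(x,w)=1$, so taking $y=x$ makes the second conjunct trivially true, hence $C'$ accepts $x$. Conversely, if $C'$ accepts $x$, there exist $y, w$ with $C(y,w)=1$ and $y_i = 1 \Rightarrow x_i = 1$ for all $i$; the first fact gives $y \in L$, and the second gives $y \le x$ coordinatewise, so by monotonicity of $L$ we conclude $x \in L$. Thus $L \in \mNP$. The main conceptual obstacle is the soundness step, where a naive circuit that just evaluates $C(x,w)$ would not be syntactically monotone in $x$; the fix is to introduce the auxiliary guess $y$ playing the role of a ``monotone shadow'' of $x$ that $C$ is actually evaluated on, and then invoking the semantic monotonicity of $L$ to lift acceptance from $y$ up to $x$.
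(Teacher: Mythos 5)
Your proof is correct, and it is exactly the standard argument for this fact (the paper cites \cite{GrigniS90} without reproducing a proof). The easy direction is routine, and for the hard direction your ``monotone shadow'' construction $C'(x;y,w) = C(y,w) \wedge \bigwedge_i(\neg y_i \vee x_i)$ — guess $y\leq x$ together with a witness, verify $C(y,w)=1$, and invoke semantic monotonicity of $L$ to lift acceptance from $y$ to $x$ — is precisely the Grigni--Sipser reduction; the negations land only on the non-deterministic coordinates $y_i$, so syntactic monotonicity in the standard inputs holds, the circuit stays polynomial-size and uniform, and completeness (take $y=x$) and soundness (monotonicity of $L$) are both handled correctly.
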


\subsection{Computational Indistinguishability}
\begin{definition}\label{def:computationalIndis}
  Two sequences of random variables $X = \{ X_n\}_{n \in \N}$ and $Y =
  \{Y_n\}_{n \in \N}$ are \textsf{computationally indistinguishable} if for
  every probabilistic polynomial-time algorithm $A$ there exists an integer $N$
  such that for all $n\geq N$,
  \begin{align*}
    \abs{ \Pr[A(X_n) = 1] - \Pr[A(Y_n) = 1] } \leq \negl(n).
  \end{align*}
  where the probabilities are over $X_n$, $Y_n$ and the internal randomness of
  $A$.
\end{definition}

\subsection{Secret-Sharing}
A perfect (\resp computational) secret-sharing scheme involves a dealer who has
a secret, a set of $n$ parties, and a collection $A$ of ``qualified'' subsets of
parties called the access structure. A secret-sharing scheme for $A$ is a method
by which the dealer (\resp efficiently) distributes shares to the parties such
that (1) any subset in $A$ can (\resp efficiently) reconstruct the secret from
its shares, and (2) any subset not in $A$ cannot (\resp efficiently) reveal any
partial information on the secret. For more information on secret-sharing
schemes we refer to \cite{Beimel11} and references therein.

Throughout this paper we deal with secret-sharing schemes for access structures
over
% in $\NP$. Let $M\in \NP$ be a Boolean monotone function on subsets of
$n$ parties $\parties=\parties_n=\{\party_1,\dots,\party_n\}$. % $M$ defines
% an \emph{access structure} on $\parties$.

\begin{definition}[Access structure]
  An \textsf{access structure} $M$ on $\parties$ is a monotone set of subsets of
  $\parties$. That is, for all $\sop\in M$ it holds that $\sop
  \subseteq \parties$ and for all $\sop\in M$ and $\sop'$ such that
  $\sop\subseteq \sop'\subseteq \parties$ it holds that $\sop'\in M$.
\end{definition}
We may think of $M$ as a characteristic function $M:2^{\parties}\to\B$ that
outputs $1$ given as input $\sop\subseteq \parties$ if and only if $\sop$ is in
the access structure.

Many different definitions for secret-sharing schemes appeared in the
literature. Some of the definitions were not stated formally and in some cases
rigorous security proofs were not given. Bellare and Rogaway \cite{BellareR07}
survey many of these different definitions and recast them in the tradition of
provable-security cryptography. They also provide some proofs for well-known
secret-sharing schemes that were previously unanalyzed. We refer to
\cite{BellareR07} for more information.

\subsection{Witness Encryption}
\begin{definition}[Witness encryption \cite{GentryLW14}]\label{def:we}
  A \textsf{witness encryption} scheme for an $\NP$ language $L$ (with a
  corresponding relation $R$) consists of the following two polynomial-time
  algorithms:
\begin{itemize}
\item[] $\Encrypt(1^\lambda, x, M)$: Takes as input a
  security parameter $1^\lambda$, an unbounded-length string $x$ and an message
  $M$ of polynomial length in $\lambda$, and outputs a ciphertext $\ct$.
\item[] $\Decrypt(\ct, w)$: Takes as input a
  ciphertext $\ct$ and an unbounded-length string $w$, and outputs a message $M$
  or the symbol $\bot$.
\end{itemize}
These algorithms satisfy the following two conditions:
\begin{enumerate}
\item \textbf{Completeness (Correctness):} For any security parameter $\lambda$,
  any $M\in\B^{\poly(\lambda)}$ and any $x\in L$ such that $R(x,w)$ holds, we
  have that
  \begin{align*}
    \Pr[\Decrypt (\Encrypt(1^\lambda, x, M), w) = M] = 1.
  \end{align*}
\item \textbf{Soundness (Security):} For any probabilistic polynomial-time
  adversary $A$, there exists a negligible function $\negl(\cdot)$, such that
  for any $x\notin L$ and equal-length messages $M_1$ and $M_2$ we have that
  \begin{align*}
    \abs{\Pr[A(\Encrypt(1^\lambda, x, M_1)=1] - \Pr[A(\Encrypt(1^\lambda, x,
      M_2)=1]} \leq \negl(\lambda).
  \end{align*}
\end{enumerate}
\end{definition}

\paragraph{Remark.} Our definition of Rudich secret-sharing (that is given in
\Cref{sec:RudichSecretSharingDefs}) is uniform. The most common definition of
witness encryption in the literature is a non-uniform one (both in the instance
and in the messages). To achieve our notion of security for Rudich
secret-sharing it is enough to use a witness encryption scheme in which the
messages are chosen uniformly.

\subsection{Commitment Schemes}
In our construction we need a non-interactive commitment scheme such that
commitments of different strings has disjoint support. Since the dealer in the
setup phase of a secret-sharing scheme is not controlled by an adversary (\ie it
is honest), we can relax the foregoing requirement and use non-interactive
commitment schemes that work in the CRS (common random string) model, Moreover,
since the domain of input strings is small (it is of size $2n$) issues of
non-uniformity can be ignored. Thus, we use the following definition:
\begin{definition}[Commitment scheme in the CRS 
model]\label{def:commitment_scheme}
A polynomial-time computable function $\Com \colon \B^{\ell}\times\B^n \times
\B^m \to \B^{*}$, where $\ell$ is the length of the string to commit, $n$ is the
length of the randomness, $m$ is the length of the CRS. We say that $\Com$ is a
(non-interactive perfectly binding) commitment scheme in the CRS model if for
any two inputs $x_1,x_2 \in \B^{\ell}$ such that $x_1 \ne x_2$ it holds that:
  \begin{enumerate}
  \item \emph{Computational Hiding:} Let $\crs\leftarrow \B^m$ be chosen
    uniformly at random. The random variables $\Com(x_1,\U_n,\crs)$ and
    $\Com(x_2,\U_n,\crs)$ are computationally indistinguishable (given
    $\crs$).

  \item \emph{Perfect Binding:} With all but negligible probability over the 
  CRS, the supports of the above random variables are disjoint.
  \end{enumerate}
\end{definition}
Commitment schemes that satisfy the above definition, in the CRS model, can be
constructed based on any pseudorandom generator \cite{Naor91} (which can be
based on any one-way functions \cite{HastadILL99}). For simplicity, throghout
the paper we ignore the CRS % by conditioning on the event that the CRS is
% ``good'' (in the sense that the aforementioned random variables are disjoint)
and simply write $\Com(\cdot,\cdot)$. We say that $\Com(x,r)$ is the
\textsf{commitment} of the value $x$ with the \textsf{opening} $r$.

\section{The Definition of Rudich
  Secret-Sharing} \label{sec:RudichSecretSharingDefs}

In this section we formally define computational secret-sharing for access
structures realizing monotone functions in $\NP$, which we call \emph{Rudich
  secret-sharing}.  Even though secret-sharing schemes for functions in $\NP$
were considered in the past \cite{VinodNSRK03,Beimel11,GargGSW13}, no formal
definition was given.

Our definition consists of two requirements: completeness
and security. The \emph{completeness} requirement assures that a ``qualified''
subset of parties that wishes to reconstruct the secret and \emph{knows} the
witness will be successful. The \emph{security} requirement guarantees that as
long as the parties form an ``unqualified'' subset, they are unable to learn
the secret.

Note that the security requirement stated above is possibly hard to check
efficiently: For some access structures in $\mNP$ (\eg monotone $\NP$-complete
problems) it might be computationally hard to verify that the parties form an
``unqualified'' subset. Next, in \Cref{def:rudichSecretSharing} we give a
\emph{uniform} definition of secret-sharing for $\NP$. In
\Cref{sec:alternative} we give an alternative definition and show their
equivalence.

\begin{definition}[Rudich secret-sharing]\label{def:rudichSecretSharing}
  Let $M:2^\parties\to\B$ be an access structure corresponding to a language $L \in \mNP$
  and let $\ver_M$ be a verifier for $L$. A secret-sharing scheme $\Sc$ for
  $M$ consists of a setup procedure
  $\SETUP$ and a reconstruction procedure $\RECON$ that satisfy the following
  requirements:
  \begin{enumerate}
  \item $\SETUP(1^n,\secret)$ gets as input a secret $\secret$ and distributes a
    share for each party. For $i\in[n]$ denote by $\share(\secret,i)$ the
    random variable that corresponds to the share of party $\party_i$.
    Furthermore, for $\sop\subseteq \parties$ we denote by
    $\shares(\secret, \sop)$ the random variable that corresponds to the set
    of shares of parties in $\sop$.

  \item\label{def:ss_completeness} Completeness:
  	
    If $\RECON(1^n,\shares(\secret,\sop),w)$ gets as input the shares of a
    ``qualified'' subset of parties and a valid witness, and outputs the shared
    secret. Namely, for $\sop\subseteq
    \parties$ if $M(\sop) = 1$, then for any valid witness $w$ such that
    $\ver_M(\sop,w)=1$, it holds that:
    \begin{align*}
      \Pr\left[\RECON(1^n,\shares(\secret, \sop), w) =
        \secret\right] = 1,
    \end{align*}
    where the probability is over the internal randomness of the scheme and of
    $\RECON$.

  \item\label{def:ss_indist} Indistinguishability of the Secret:

    For every pair of probabilistic polynomial-time algorithms $(\Samp,D)$ where
    $\Samp(1^n)$ defines a distribution over pairs of secrets
    $\secret_0,\secret_1$, a subset of parties $\sop$ and auxiliary information
    $\sigma$, it holds that
    \begin{align*}
      | \Pr&\left[M(\sop) = 0 \myand D(1^n, \secret_0,\secret_1,
        \shares(\secret_0,\sop),\sigma)
        = 1 \right]  - \\
      & \Pr\left[M(\sop) = 0 \myand D(1^n, \secret_0,\secret_1,
        \shares(\secret_1,\sop),\sigma) = 1 \right] | \leq \negl(n),
    \end{align*}
    where the probability is over the internal randomness of the scheme, the
    internal randomness of $D$ and the distribution
    $(\secret_0,\secret_1,\sop,\sigma)\leftarrow \Samp(1^n)$.

    That is, for every pair of probabilistic polynomial-time algorithms
    $(\Samp,D)$ such that $\Samp$ chooses two secrets $\secret_0,\secret_1$ and
    a subset of parties $\sop\subseteq \parties$, if $M(\sop)=0$ then $D$ is
    unable to distinguish (with noticeable probability) between the shares of
    $X$ generated by $\SETUP(\secret_0)$ and the shares of $X$ generated by
    $\SETUP(\secret_1)$.
  \end{enumerate}
\end{definition}

\paragraph{Notation.} For ease of notation, $1^n$ and $\sigma$ are omitted
when they are clear from the context.

\subsection{An Alternative Definition: Semantic Security}\label{sec:alternative}
The security requirement (\ie the third requirement) of a Rudich secret-sharing
scheme that is given in \Cref{def:rudichSecretSharing} is phrased in the spirit
of \emph{computational indistinguishability}.  A different approach is to define
the security of a Rudich secret-sharing in the spirit of \emph{semantic
  security}. As in many cases (\eg encryption \cite{GoldwasserM84}), it turns
out that the two definitions are equivalent.

\begin{definition}[Rudich secret-sharing - semantic security
  version]\label{def:rudichSecretSharingSemanticSecurity}

  Let $M:2^\parties\to\B$ be an $\mNP$ access structure with verifier
  $\ver_M$. A secret-sharing scheme $\Sc$ for $M$ consists of a setup procedure
  $\SETUP$ and a reconstruction procedure $\RECON$ as in
  \Cref{def:rudichSecretSharing} and has the following property \emph{instead}
  of the \emph{indistinguishability of the secret} property:
  \begin{enumerate}
  \item[3]\label{def:ss_unlearn} Unlearnability of the Secret:

    For every pair of probabilistic polynomial-time algorithms $(\Samp,D)$ where
    $\Samp(1^n)$ defines a distribution over a secret $S$, a subset of parties
    $\sop$ and auxiliary information $\sigma$, and for every efficiently
    computable function $f:\B^*\to\B^*$ it holds that there exists a
    probabilistic polynomial-time algorithm $D'$ (called a \emph{simulator})
    such that
    \begin{align*}
      | \Pr&\left[M(\sop) = 0 \myand D(1^n, \shares(\secret,\sop), \sigma)
        = f(\secret) \right]  - \\
      & \Pr\left[M(\sop) = 0 \myand D'(1^n, \sop, \sigma)= f(\secret) \right] |
      \leq \negl(n),
    \end{align*}
    where the probability is over the internal randomness of the scheme, the
    internal randomness of $D$ and $D'$, and the distribution $(\secret, X,
    \sigma)\leftarrow \Samp(1^n)$.

    That is, for every pair of probabilistic polynomial-time algorithms
    $(\Samp,D)$ such that $\Samp$ chooses a secret $\secret$ and a subset of
    parties $\sop\subseteq \parties$, if $M(\sop)=0$ then $D$ is unable to learn
    anything about $\secret$ that it could not learn without access to the
    secret shares of $\sop$.
  \end{enumerate}
\end{definition}

\def\definitionEquiv{\Cref{def:rudichSecretSharingSemanticSecurity} and
  \Cref{def:rudichSecretSharing} are equivalent.}
\begin{theorem}\label{thm:def_equiv}
\definitionEquiv
\end{theorem}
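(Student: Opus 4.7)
The plan is to prove both implications separately, following the template of the classical indistinguishability/semantic-security equivalence for encryption.

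For the forward direction \Cref{def:rudichSecretSharing}$\Rightarrow$\Cref{def:rudichSecretSharingSemanticSecurity}, the natural simulator runs the unlearnability attacker $D$ on shares of a fixed dummy secret. Concretely, given attackers $(\Samp, D)$ and a function $f$, I will define $D'(\sop, \sigma)$ to call $\SETUP(1^n, 0^{|\secret|})$, extract the shares of the parties in $\sop$, and output $D$ applied to those dummy shares together with $\sigma$. To bound the gap between $D$ and $D'$ I will exhibit a single IND adversary: the IND sampler runs $\Samp$ to get $(\secret, \sop, \sigma)$ and outputs $(\secret_0 := \secret,\; \secret_1 := 0^{|\secret|},\; \sop,\; \sigma' := (f(\secret), \sigma))$, while the IND distinguisher runs $D$ on the challenge shares together with the $\sigma$-component of $\sigma'$ and accepts iff $D$'s output equals the stored value $f(\secret)$. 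The $b=0$ branch of the IND game then matches the real unlearnability experiment exactly, and the $b=1$ branch matches the simulator experiment, so the IND security bound delivers the required negligible gap.

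For the reverse direction \Cref{def:rudichSecretSharingSemanticSecurity}$\Rightarrow$\Cref{def:rudichSecretSharing} I argue by contrapositive. Starting from an IND adversary $(\Samp^{\mathsf{IND}}, D_{\mathsf{IND}})$ with noticeable advantage $\eps(n)$, set $q_b = \Pr[M(\sop)=0 \myand D_{\mathsf{IND}}(\secret_0, \secret_1, \shares(\secret_b, \sop), \sigma)=1]$, so that $|q_0 - q_1| \geq \eps(n)$. I will build an unlearnability counterexample in which $\Samp^*$ runs $\Samp^{\mathsf{IND}}$, draws $b \uniran \B$, and outputs the secret $\secret := \secret_b$, the same $\sop$, and auxiliary information $\sigma^* := (\secret_0, \secret_1, \sigma)$, while $f$ is set to the identity. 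The distinguisher $D^*$ invokes $D_{\mathsf{IND}}$ on $(\secret_0, \secret_1, \shares(\secret_b, \sop), \sigma)$, obtains a bit $b'$, and outputs $\secret_{1-b'}$ (respectively $\secret_{b'}$ for the opposite sign of $q_0 - q_1$); presenting both variants, one of them must work. A short calculation then shows $\Pr[D^* = f(\secret) \myand M(\sop)=0] \geq \tfrac{1}{2}\Pr[M(\sop)=0] + \eps(n)/2$, while any simulator $D'(\sop, \sigma^*)$, facing a uniform bit $b$ independent of its view, can correctly name the random element of the two exposed candidates $\{\secret_0, \secret_1\}$ with probability at most $\tfrac{1}{2}\Pr[M(\sop)=0]$, yielding the contradiction.

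The subtle point I expect to wrestle with is precisely that the semantic-security definition forces $f$ to depend only on $\secret$: a more direct reduction that would make $f$ output the challenge bit $b$ would need to bake $b$ into $\secret$ itself, but this would change the shares the IND adversary is designed to handle and derail the reduction. The workaround above keeps $f$ as the identity and moves the ``challenge'' into $\sigma^*$ by exposing $(\secret_0, \secret_1)$, so that the simulator is reduced to a coin-flip between two named candidates. The only remaining technicalities are (i) the event $\secret_0 = \secret_1$, which contributes nothing to the IND advantage and can be discarded without loss of generality, and (ii) the unknown sign of $q_0 - q_1$, which is handled by presenting two attacker variants as above; once these are dispatched, each direction reduces to a clean hybrid argument against a single challenger.
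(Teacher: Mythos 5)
Your proposal is correct and follows the paper's proof in both directions: the simulator in $\Cref{def:rudichSecretSharing}\Rightarrow\Cref{def:rudichSecretSharingSemanticSecurity}$ is $D$ run on shares of a fixed dummy secret, and the converse is shown by running $\Samp$, flipping a bit $b$, sharing $\secret_b$, and placing $(\secret_0,\secret_1,\sigma)$ in the auxiliary string so that any simulator is blind to $b$. The only deviation is how you resolve the ``subtle point'' about $f$ depending solely on $\secret$: you set $f=\mathrm{id}$ and have the distinguisher translate $D$'s bit into one of the two exposed candidates $\secret_{b'}$ or $\secret_{1-b'}$, while the paper instead fixes a dictator (bit-projection) function $f_i$ with $f_i(\secret_0)\neq f_i(\secret_1)$ and compares $D$'s raw output bit to $f_i(\secret_b)$. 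Your variant is arguably a touch cleaner since it avoids reasoning about which dictator distinguishes a given pair of sampled secrets; both approaches correctly observe that the $\secret_0=\secret_1$ samples contribute nothing to the indistinguishability advantage, and your two-variant handling of the sign of $q_0-q_1$ plays the same role as the paper's absolute values.
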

We defer the proof of \Cref{thm:def_equiv} to \Cref{sec:def_equiv}.

\subsection{Definition of Adaptive Security}\label{sec:adaptive_ss}
Our definition of Rudich secret-sharing only guarantees security against static
adversaries. That is, the adversary chooses a subset of parties before it sees
any of the shares. In other words, the selection is done \emph{independently} of
the sharing process and hence, we may think of it as if the sharing process is
done \emph{after} $\Samp$ chooses $\sop$.

A stronger security guarantee would be to require that even an adversary that
chooses its set of parties in an \emph{adaptive} manner based on the shares it
has seen so far is unable to learn the secret (or any partial information about
it). Namely, the adversary chooses the parties one by one depending on the
secret shares of the previously chosen parties.

% Not entirely clear that this is where we want this here or later, but
%certainly after the formal definition

The security proof of our scheme (which is given in \Cref{sec:ss_from_io}) does
not hold under this stronger requirement. It would be interesting to strengthen
it to the adaptive case as well. One problem that immediately arises in an
analysis of our scheme against adaptive adversaries is that of \emph{selective
  decommitment} (cf.\ \cite{DworkNRS03}), that is when an adversary sees a
collection of commitments and can select a subset of them and receive their
openings. The usual proofs of security of commitment schemes are not known to
hold in this case.

\section{Rudich Secret-Sharing from Witness Encryption}\label{sec:ss_from_io}
In this section we prove the main theorem of this paper. We show how to
construct a Rudich secret-sharing scheme for any $\mNP$ access structure
assuming witness encryption for $\NP$ and one-way functions.

\begin{reptheorem}{thm:main}[Restated]
  Assuming witness encryption for $\NP$ and one-way functions, there is an
  efficient computational secret-sharing scheme for every $\mNP$ access
  structure.
\end{reptheorem}

Let $\parties = \{\party_1,\dots,\party_n\}$ be a set of $n$ parties and let
$M:2^{\parties}\to\B$ be an $\mNP$ access structure. We view $M$ either as a
function or as a language. For a language $L$ in $\NP$ let
$(\Encrypt_L,\Decrypt_L)$ be a witness encryption scheme and let
$\Com:[2n]\times\B^{n}\to\B^{q(n)}$ be a commitment scheme, where $q(\cdot)$ is
a polynomial.

\paragraph{The Scheme.} 
We define a language $M'$ that is related to $M$ as follows. The language $M'$
consists of sets of $n$ strings $\{\com_i\}_{i\in[n]}\in\B^{q(n)}$ as follows. 
$M'(\com_1,\dots,\com_n) = 1$
if and only if there exist $\{r_i\}_{i\in[n]}$ such that
$M(x) = 1$, where $x \in \B^n$ is such that
\begin{align*}
  \forall i\in[n]:\; \quad x_i =
  \begin{cases}
    1 & \text{if } r_i \neq \bot \text{ and }\Com(i, r_i) =
    \com_i,\\
    0 & \text{otherwise.}
  \end{cases}
\end{align*}

For every $i\in[n]$, the \emph{share} of party $\party_i$ is composed of 2
components: (1) $r_i\in\B^{n}$ - an opening of a commitment to the value $i$,
and (2) a witness encryption $\ct$. The witness encryption encrypts the secret
$\secret$ with respect to the commitments of all parties $\{\com_i =
\Com(i,r_i)\}_{i\in[n]}$. To reconstruct the secret given a subset of parties
$\sop$, we simply decrypt $\ct$ given the corresponding openings of $\sop$ and
the witness $w$ that indeed $M(\sop)=1$. The secret-sharing scheme is formally
described in \Cref{fig:SS}.
\begin{figure}[h]
  \begin{boxedminipage}{\textwidth}
    \small \medskip \noindent

    \vspace{3mm} \textbf{The Rudich Secret-Sharing Scheme $\Sc$ for $M$}

    \vspace{4mm} \textbf{The $\SETUP$ Procedure:}

    \vspace{3mm} \emph{Input}: A secret $\secret$.

    \vspace{1mm} Let $M'$ be the language as described above, and let 
    $(\Encrypt_{M'},\Decrypt_{M'})$ be a witness encryption for $M'$ (see 
    \Cref{def:we}).

    \begin{enumerate}
    \item\label{fig_ss:1}\text{For $i\in[n]$:}
      \begin{enumerate}
      %\item Sample uniformly at random an identifier $a_i\in\B^n$.
      \item Sample uniformly at random an opening $r_i\in\B^n$.
      %\item Compute the commitment $\com_i = \Com( \langle i, a_i \rangle,
      %  r_i)$.
      \item Compute the commitment $\com_i = \Com( i, r_i)$.
      \end{enumerate}

    \item Compute $\ct \leftarrow \Encrypt_{M'}((\com_1,\dots,\com_n), 
    \secret)$.

    \item Set the share of party $\party_i$ to be $\share(\secret,i)= \langle
    r_i, \ct \rangle$.

    \end{enumerate}

    \vspace{1mm}

    \vspace{3mm} \textbf{The $\RECON$ Procedure:}

    \vspace{3mm} \emph{Input}: A non-empty subset of parties $
    \sop\subseteq\parties$ together with their shares and a witness $w$ of
    $\sop$ for $M$.

    \begin{enumerate}
    \item Let $\ct$ be the witness encryption in the shares of $\sop$.
    \item For any $i \in [n]$ let
	$ 
	  r'_i =
      \begin{cases}
        r_i & \text{if } p_i \in X  \\
        \bot & \text{otherwise.}
      \end{cases}
    $
    \item Output $\Decrypt_{M'}(\ct, (r'_1,\dots,r'_n, w))$.
    \end{enumerate}

    \vspace{1mm}

  \end{boxedminipage}

  \caption{Rudich secret-sharing scheme for $\NP$.}
  \label{fig:SS}
\end{figure}

Observe that if the witness encryption scheme and $\Com$ are both efficient, 
then
the scheme is efficient (\ie $\SETUP$ and $\RECON$ are probabilistic
polynomial-time algorithms). $\SETUP$ generates $n$ commitments and a witness
encryption of polynomial size. $\RECON$ only decrypts this witness encryption.

\paragraph{Completeness.} The next lemma states that the scheme is
complete. That is, whenever the scheme is given a qualified
$\sop\subseteq \parties$ and a valid witness $w$ of $\sop$, it is possible to
successfully reconstruct the secret.
\begin{lemma}\label{lemma:completeness}
  Let $M \in \NP$ be an $\mNP$ access structure. Let $\Sc = \Sc_M$ be the scheme
  from \Cref{fig:SS} instantiated with $M$. For every subset of parties
  $\sop\subseteq \parties$ such that $M(\sop)=1$ and any valid witness $w$ it
  holds that
  \begin{align*}
    \Prp{\RECON(\shares(\secret, \sop), w) = \secret}=1.
  \end{align*}
\end{lemma}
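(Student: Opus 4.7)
The plan is to verify that the reconstruction procedure always feeds a valid witness into $\Decrypt_{M'}$, and then invoke perfect completeness of the witness encryption scheme. Since every probability in the statement is over the internal randomness of $\SETUP$ and $\RECON$, I only need to argue this pointwise, for every fixing of $r_1, \ldots, r_n$ produced in step~\ref{fig_ss:1} of $\SETUP$.

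First I would unpack what $\RECON$ actually submits to $\Decrypt_{M'}$. Fix any $\sop \subseteq \parties$ with $M(\sop)=1$, any valid witness $w$ with $\ver_M(\sop,w)=1$, and any realization of $r_1, \ldots, r_n$ from $\SETUP$, which determines $\com_i = \Com(i, r_i)$ and the ciphertext $\ct = \Encrypt_{M'}((\com_1, \ldots, \com_n), \secret)$. The procedure $\RECON$ forms the string $(r'_1, \ldots, r'_n, w)$, where $r'_i = r_i$ for $\party_i \in \sop$ and $r'_i = \bot$ otherwise, and calls $\Decrypt_{M'}(\ct, (r'_1, \ldots, r'_n, w))$.

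Next I would show that $(r'_1, \ldots, r'_n, w)$ is a valid witness for $(\com_1, \ldots, \com_n) \in M'$. Consider the string $x \in \B^n$ defined from the $r'_i$'s as in the definition of $M'$. For $\party_i \in \sop$, we have $r'_i = r_i \neq \bot$ and by construction $\Com(i, r'_i) = \Com(i, r_i) = \com_i$, so $x_i = 1$. For $\party_i \notin \sop$, we have $r'_i = \bot$, so $x_i = 0$. Hence $x$ is precisely the characteristic vector of $\sop$, and since $M(\sop) = 1$ with witness $w$ (certified by $\ver_M$), we get $M(x) = 1$ with the same witness $w$. This is exactly what the relation of $M'$ demands, so $(r'_1, \ldots, r'_n, w)$ witnesses $(\com_1, \ldots, \com_n) \in M'$.

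Finally, I would invoke the perfect completeness clause of \Cref{def:we} applied to the witness encryption scheme $(\Encrypt_{M'}, \Decrypt_{M'})$: on a valid $(\text{instance}, \text{witness})$ pair, $\Decrypt_{M'}(\ct, \cdot)$ recovers the encrypted message with probability $1$. Therefore $\RECON$ outputs $\secret$ with probability $1$. No step here is genuinely hard; the only thing to keep an eye on is that the transformation from the $r'_i$'s back to the characteristic vector of $\sop$ is forced, which relies on the perfect binding of $\Com$ only insofar as $\Com(i, r_i) = \com_i$ holds by construction (no collision argument is even needed for completeness).
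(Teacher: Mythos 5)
Your proof is correct and follows the same route as the paper's: form the witness $(r'_1,\ldots,r'_n,w)$ for $M'$ from the openings of the qualified parties together with the $\mNP$-witness for $\sop$, verify it is accepted by $M'$, and invoke perfect completeness of the witness encryption scheme. The only difference is one of explicitness — you spell out that the string $x$ induced by $(r'_1,\ldots,r'_n)$ equals the characteristic vector of $\sop$, which the paper leaves implicit; both proofs are otherwise identical in substance.
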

\begin{proof}
  Recall the definition of the algorithm $\RECON$ from \Cref{fig:SS}: $\RECON$
  gets as input the shares of a subset of parties $\sop =
  \{\party_{i_1},\dots,\party_{i_k}\}$ for $k,i_1,\dots,i_k\in[n]$ and a valid
  witness $w$. Recall that the shares of the parties in $\sop$ consist of $k$
  openings for the corresponding commitments and a witness encryption
  $\ct$. $\RECON$ decrypts $\ct$ given the openings of parties in $\sop$ and 
  the
  witness $w$.

  By the completeness of the witness encryption scheme, the output of the
  decryption procedure on $\ct$, given a valid $\sop$ and a valid witness, is
  $\secret$ (with probability 1).
\end{proof}

\paragraph{Indistinguishability of the Secret.} We show that our scheme is
secure. More precisely, we show that given an ``unqualified'' set of parties
$\sop \subseteq \parties$ as input (\ie $M(\sop)=0$), with overwhelming
probability, any probabilistic polynomial-time algorithm cannot distinguish the
shared secret from another.

To this end, we assume towards a contradiction that
such an algorithm exists and use it to efficiently solve the following task:
given two lists of $n$ commitments and a promise that one of them corresponds to
the values $\{1,\dots,n\}$ and the other corresponds to the values
$\{n+1,\dots,2n\}$, identify which one corresponds to the values
$\{1,\dots,n\}$. The following lemma shows that solving this task efficiently
can be used to break the hiding property of the commitment scheme.

\def\ManytoOne{
  Let $\Com\colon[2n]\times\B^{n}\to\B^{q(n)}$ be a commitment scheme
  where
  $q(\cdot)$ is a polynomial. If there exist $\eps = \eps(n) > 0$ and a
  probabilistic polynomial-time algorithm $D$ for which
  \begin{align*}
    | \Pr&[D(\Com(1,\U_{n}),\dots,\Com(n,\U_{n}))=1] - \\
    &\Pr[D(\Com(n,\U_{n}),\dots, \Com(2n,\U_{n}))=1] | \ge \eps,
  \end{align*}
  then there exist a probabilistic polynomial-time algorithm $D'$ and
  $x,y\in[2n]$ such that
  \begin{align*}
    \abs{ \Pr[D'(\Com(x,\U_{n}))=1] - \Pr[D'(\Com(y,\U_{n}))=1] } \ge
    \eps/n.
  \end{align*}
}
\begin{lemma}\label{lemma:list_of_hardness}
\ManytoOne
\end{lemma}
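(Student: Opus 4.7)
The plan is to prove \Cref{lemma:list_of_hardness} by a standard hybrid argument, reducing a distinguisher for two tuples of $n$ commitments to a distinguisher for a single commitment.

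First I would define $n+1$ hybrid distributions $H_0, H_1, \ldots, H_n$ that interpolate between the two distributions in the hypothesis. Specifically, let
\[
H_j \;=\; \bigl(\Com(n+1,\U_n),\, \ldots,\, \Com(n+j,\U_n),\; \Com(j+1,\U_n),\, \ldots,\, \Com(n,\U_n)\bigr),
\]
so that the first $j$ coordinates hold commitments to the ``shifted'' values $n+1,\ldots,n+j$ while the remaining coordinates hold commitments to the ``original'' values $j+1,\ldots,n$. Observe that $H_0$ agrees with the first distribution in the hypothesis and $H_n$ agrees with the second. By assumption we then have $\bigl|\Pr[D(H_0)=1] - \Pr[D(H_n)=1]\bigr| \ge \eps$.

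Next I would apply the triangle inequality to the telescoping sum
\[
\Pr[D(H_0)=1] - \Pr[D(H_n)=1] \;=\; \sum_{j=1}^{n} \bigl(\Pr[D(H_{j-1})=1] - \Pr[D(H_{j})=1]\bigr),
\]
which guarantees the existence of an index $j^{*}\in[n]$ such that $\bigl|\Pr[D(H_{j^{*}-1})=1] - \Pr[D(H_{j^{*}})=1]\bigr| \ge \eps/n$. Crucially, $H_{j^{*}-1}$ and $H_{j^{*}}$ are identical in all coordinates except the $j^{*}$-th, where one contains $\Com(j^{*},\U_n)$ and the other contains $\Com(n+j^{*},\U_n)$.

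Finally, I would set $x = j^{*}$ and $y = n+j^{*}$ and construct $D'$ as follows: on input a single commitment $c$, the algorithm $D'$ samples fresh openings and produces commitments to all the other (known, fixed) values $n+1,\dots,n+j^{*}-1,\, j^{*}+1,\dots,n$ by itself, places $c$ in coordinate $j^{*}$, and runs $D$ on the resulting $n$-tuple, outputting whatever $D$ outputs. If $c\leftarrow \Com(x,\U_n)$ then $D$ sees exactly a sample from $H_{j^{*}-1}$, and if $c\leftarrow \Com(y,\U_n)$ then $D$ sees a sample from $H_{j^{*}}$; hence $D'$ inherits the advantage $\eps/n$. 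The main (very mild) obstacle is that $D'$ must ``know'' the best index $j^{*}$, but since the lemma only asserts existence of $x,y$ and $D'$, this non-uniform hardwiring is permissible and no search or estimation over the hybrids is needed.
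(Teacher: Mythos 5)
Your proof is correct and follows the same standard hybrid argument as the paper's own proof in Appendix B. The only cosmetic differences are that you swap commitments starting from the front of the tuple while the paper swaps from the back, and you spell out the construction of $D'$ (embedding the challenge in coordinate $j^{*}$ and self-generating the rest), which the paper leaves implicit.
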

The proof of the lemma follows from a standard hybrid argument. See
full details in \Cref{sec:prf_lemma_lest}.

At this point we are ready to prove the security of our scheme. That is, we show
that the ability to break the security of our scheme translates to the ability
to break the commitment scheme (using \Cref{lemma:list_of_hardness}).

\begin{lemma}\label{lemma:security}
  Let $\parties = \{\party_1,\dots,\party_n\}$ be a set of $n$ parties. Let
  $M:2^\parties\to\B$ be an $\mNP$ access structure. If there exist a
  non-negligible $\eps=\eps(n)$ and a pair of probabilistic polynomial-time
  algorithms $(\Samp,D)$ such that for $(\secret_0,\secret_1,\sop)\leftarrow
  \Samp(1^n)$ it holds that
  \begin{align*}
    \Pr&\left[M(\sop) = 0 \myand D( \secret_0, \secret_1,
      \shares(\secret_0,\sop)) = 1 \right] \\
    & - \Pr\left[M(\sop) = 0 \myand D( \secret_0, \secret_1,
      \shares(\secret_1,\sop)) = 1 \right] \geq \eps,
  \end{align*}
  then there exists a probabilistic algorithm $D'$ that runs in polynomial-time
  in $n/\eps$ such that for sufficiently large $n$
  \begin{align*}
    | \Pr&[D'(\Com(1,\U_{n}),\dots, \Com(n,\U_{n})
      )=1] - \\
    & \Pr[D'(\Com(n+1,\U_{n}),\dots,\Com(2n,\U_{n}) )=1] | \ge \eps/10
    - \negl(n).
    \end{align*}
\end{lemma}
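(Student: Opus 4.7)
The plan is to build $D'$ following the Main Idea. Given input commitments $\com^*_1,\dots,\com^*_n$, the core step of $D'$ is to sample $(\secret_0,\secret_1,\sop) \leftarrow \Samp(1^n)$, a bit $b \in \B$, and fresh openings $r_i$ for $i \in \sop$, and form the hybrid ciphertext $\ct' = \Encrypt_{M'}((\com'_1,\dots,\com'_n),\secret_b)$ with $\com'_i = \Com(i,r_i)$ for $i \in \sop$ and $\com'_i = \com^*_i$ otherwise. Calling $D(\secret_0,\secret_1,\{(r_i,\ct')\}_{i\in\sop})$ then produces a guess at $b$. The full $D'$ will use polynomially many invocations of this core step to build empirical estimates and then decide between $A_0$ and $A_1$ as described below.

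This core step is sound when $M(\sop) = 0$. Under $A_0$ the hybrid shares are distributed identically to real shares, so $D$ retains its advantage $\text{Adv}(\sop) := \Pr[D=1 \mid \shares(\secret_0,\sop)] - \Pr[D=1 \mid \shares(\secret_1,\sop)]$. Under $A_1$, perfect binding of $\Com$ forces the $\NP$-encoding bit-vector of the hybrid instance to equal the characteristic vector of $\sop$; since $M(\sop) = 0$, the instance lies outside $M'$ and witness encryption security caps $D$'s advantage at $\negl(n)$. A Markov-type averaging applied to the hypothesis gives $\Pr_{\sop}[M(\sop) = 0 \wedge \text{Adv}(\sop) \ge \eps/2] \ge \eps/2$, producing a detectable gap of size $\ge \eps/2$ between the two input regimes.

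The main obstacle is the case $M(\sop) = 1$, which $\Samp$ may produce and which $D'$ cannot efficiently filter (since $M \in \mNP$). On such sops the hybrid instance remains in $M'$ even under $A_1$, via a witness built from the $r_i$'s for $i \in \sop$; witness encryption then gives no bound, and $D$'s behavior under the two inputs may cancel the $M(\sop) = 0$ signal. My plan to overcome this is the empirical-estimation strategy hinted at in the Main Idea: for each sampled $\sop$, use $\poly(n/\eps)$ trials of $D$ to compute an estimate $\widehat{\text{Adv}}_A(\sop)$ using $D'$'s own fresh commitments (which estimates the true $\text{Adv}(\sop)$ regardless of the input) and an estimate $\widehat{\text{Adv}}_B(\sop)$ on hybrid shares built with the input commitments. $D'$ declares the input to be of type $A_1$ iff $\widehat{\text{Adv}}_A(\sop) - \widehat{\text{Adv}}_B(\sop) > \eps/4$.

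Under $A_0$, marginalizing over the input commitments the two estimators have the same mean, so Chernoff--Hoeffding concentration keeps the gap below $\eps/4$ except with negligible probability. Under $A_1$, on the $\ge \eps/2$-mass event of good sops described above, witness encryption pins $\widehat{\text{Adv}}_B$ to $\negl(n)$ while $\widehat{\text{Adv}}_A \ge \eps/2 - \negl$, pushing the gap above threshold. Combining, $\abs{\Pr[D'=1 \mid A_0] - \Pr[D'=1 \mid A_1]} \ge \eps/2 - \negl(n) \ge \eps/10$. The two technical steps needing care in the full write-up are (i) the invocation of witness encryption on the fixed $A_1$-hybrid instance, treating $(\secret_0,\secret_1,\{r_i\}_{i \in \sop})$ as adversarial hints folded into the witness-encryption adversary, and (ii) a Chernoff--Hoeffding argument controlling the fluctuation of both estimators with $\poly(n/\eps)$ trials.
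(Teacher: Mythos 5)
Your core subroutine (the student's ``core step'') is exactly the paper's $\Dver$: sample $\sop$, a bit $b$, fresh openings for $\sop$, splice in the input commitments elsewhere, encrypt $\secret_b$, and ask $D$ to recover $b$. The analysis of this subroutine when $M(\sop)=0$ (identically distributed to real shares under $A_0$; perfect binding plus witness-encryption soundness under $A_1$), and the Markov argument giving $\Pr[M(\sop)=0 \wedge \text{Adv}(\sop)\ge\eps/2]\ge\eps/2$, are correct and align with the paper. You also correctly isolate the hard case $M(\sop)=1$.

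The gap is in how you handle that hard case. You compute $\widehat{\text{Adv}}_B(\sop)$ by running $\poly(n/\eps)$ trials of the core step \emph{against the same fixed input commitments}, and then you output your decision directly from the comparison $\widehat{\text{Adv}}_A - \widehat{\text{Adv}}_B > \eps/4$. Your bound on the false-positive rate under $A_0$ rests on the claim that ``marginalizing over the input commitments the two estimators have the same mean, so Chernoff keeps the gap below $\eps/4$.'' That inference is not valid: Chernoff concentrates $\widehat{\text{Adv}}_B$ around its \emph{conditional} mean $\text{Adv}(\sop,\text{input})$ (the non-$\sop$ slots of the input are a single draw, reused in every trial), not around the marginal $\text{Adv}(\sop)$. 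Having the same marginal mean gives you no control on $\Pr_{\text{input}\sim A_0}[\text{Adv}(\sop)-\text{Adv}(\sop,\text{input})>\eps/4]$; for instance, if $D$'s behavior on a hybrid instance with $M(\sop)=1$ depends heavily on the fixed non-$\sop$ commitments, this probability can be a constant, and then $\Pr[D'=1\mid A_0]$ is not negligible, so you cannot conclude a distinguishing gap of $\eps/2-\negl$.

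The paper sidesteps this by decoupling the ``is this $\sop$ useful?'' test from the input entirely. Its estimator $\Mest$ compares $\Dver$ on \emph{fresh} $A_0$ commitments versus \emph{fresh} $A_1$ commitments (never the challenge input), so it cleanly estimates the bias of $\Dver$ at that $\sop$ via a genuine Chernoff bound. The decision of which iteration to accept is therefore independent of the input. Once $\Mest$ flags a biased $\sop$, $D'$ runs $\Dver$ \emph{once} on the actual input and outputs its answer; the distinguishing advantage is then exactly the bias (at least $\eps/10$ by the $\BAD$-event analysis), with no need to concentrate anything that depends on the fixed challenge. A useful side effect of this design is that it does not even need to ``filter for $M(\sop)=0$'': if $M(\sop)=1$ but $\Dver$ nonetheless distinguishes fresh $A_0$ from fresh $A_1$, $\Mest$ accepts and $D'$ still wins, which is exactly the right criterion. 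To repair your write-up you would either replace $\widehat{\text{Adv}}_B$ by a fresh-$A_1$ estimator and then evaluate the core step once on the input, or give a substantively different argument controlling the fluctuation of $\text{Adv}(\sop,\text{input})$; I do not see how to do the latter in general.
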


The proof of \Cref{lemma:security} appears in \Cref{sec:security_proof}.

Using \Cref{lemma:security} we can prove \Cref{thm:main}, the main theorem of
this section. The \emph{completeness} requirement (\Cref{def:ss_completeness} in
\Cref{def:rudichSecretSharing}) follows directly from
\Cref{lemma:completeness}. The \emph{indistinguishability of the secret}
requirement (\Cref{def:ss_indist} in \Cref{def:rudichSecretSharing}) follows by
combining \Cref{lemma:security,lemma:list_of_hardness} together with the hiding
property of the commitment scheme. \Cref{sec:security_proof} is devoted to the
proof of \Cref{lemma:security}.

\subsection{Main Proof of Security}\label{sec:security_proof}
Let $M$ be an $\mNP$ access structure, $(\Samp,D)$ be a pair of algorithms and
$\eps>0$ be a function of $n$, as in the \Cref{lemma:security}. We are
given a
list of
(unopened) string commitments
$\com_1,\dots,\com_n\in\{\Com(z_i,r)\}_{r\in\B^n}$, where for $Z =
\{z_1,\dots,z_n\}$ either $Z=\{1,\dots,n\} \triangleq A_0$ or
$Z=\{n+1,\dots,2n\} \triangleq A_1$. Our goal is to construct an algorithm $D'$
that distinguishes between the two cases (using $\Samp$ and $D$) with
non-negligible probability (that is related to $\eps$). Recall that $\Samp$
chooses two secrets $\secret_0, \secret_1$ and $\sop\subseteq \parties$ and then
$D$ gets as input the secret shares of parties in $\sop$ for one of the
secrets. By assumption, for $(\secret_0,\secret_1,\sop)\leftarrow
\Samp(1^n)$ we
have that
\begin{align}\label{eq:assumption}
  |\Pr&\left[M(\sop)=0 \myand D( \secret_0, \secret_1,
    \shares(\secret_0,\sop)) = 1 \right] - \nonumber
  \\
  & \Pr\left[M(\sop)=0 \myand D( \secret_0, \secret_1,
    \shares(\secret_1,\sop)) = 1 \right]| \geq \eps.
\end{align}
% We assume, without loss of generality, that
% \begin{align}\label{eq:assumption}
%   \Pr&\left[M(\sop)=0 \myand D(\shares(\secret_1,\sop)) = 1 \right] - \nonumber
%   \\
%   & \Pr\left[M(\sop)=0 \myand D(\shares(\secret_0,\sop)) = 1 \right] \geq
%   \eps.
% \end{align}

Roughly speaking, the algorithm $D'$ that we define creates a new set of shares
using $\com_1,\dots,\com_n$ such that: If $\com_1,\dots,\com_n$ are commitments
to $Z=A_0$ then $D$ is able to recover the secret; otherwise, (if $Z=A_1$) it is
computationally hard to recover the secret. Thus, $D'$ can distinguish between
the two cases by running $D$ on the new set of shares and acting according to
its output.

We begin by describing a useful subroutine we call $\Dver$. The inputs to
$\Dver$ are $n$ string commitments $\com_1,\dots,\com_n$, two secrets
$\secret_0,\secret_1$ and a subset of $k\in[n]$ parties $\sop$. Assume for ease
of notations that $\sop = \{\party_1,\dots,\party_k\}$. $\Dver$ first chooses
$b$ uniformly at random from the set $\B$ and samples uniformly at random $n$
openings $r_1,\dots,r_n$ from the distribution $\U_n$. Then, $\Dver$ computes
the witness encryption $\ct'_b$ of the message $\secret_b$ with respect to the
instance $\Com(1,r_1),\dots,\Com(k,r_k),\com_{k+1},\dots,\com_{n}$ of $M'$ (see
\Cref{fig:SS}) and sets for every $i\in[n]$ the share of party $\party_i$ to be
$\shares_{}'(\secret_b,i)= \langle r_i, \ct'_b \rangle$. Finally, $\Dver$
emulates the execution of $D$ on the set of shares of $\sop$
($\shares_{}'(\secret_b,\sop)$). If the output of $D$ equals to $b$, then
$\Dver$ outputs $1$ (meaning the input commitments correspond to $Z=A_0$);
otherwise, $\Dver$ outputs $0$ (meaning the input commitments correspond to
$Z=A_1$).

The na\"{i}ve implementation of $D'$ is to run $\Samp$ to generate
$\secret_0,\secret_1$ and $\sop$, run $\Dver$ with the given string
commitments,
$\secret_0,\secret_1$ and $\sop$, and output accordingly. This, however, does
not work. To see this, recall that the assumption (\cref{eq:assumption}) only
guarantees that $D$ is able to distinguish between the two secrets when
$M(\sop)=0$. However, it is possible that with high probability (yet smaller
than
$1-1/\mathsf{poly}(n)$) over $\Samp$ it holds that $M(\sop)=1$, in which we do
not have any guarantee on $D$. Hence, simply running $\Samp$ and $\Dver$ might
fool us in outputting the wrong answer.

The first step to solve this is to observe that, by the assumption in
\cref{eq:assumption}, $\Samp$ generates an $\sop$ such that $M(\sop)=0$ with
(non-negligible) probability at least $\eps$. By this observation, notice that
by running $\Samp$ for $\Theta(n/\eps)$ iterations we are assured that with very
high probability (specifically, $1-\negl(n)$) there exists an iteration in which
$M(\sop)=0$. All we are left to do is to recognize in which iteration
$M(\sop)=0$ and only in that iteration we run $\Dver$ and output accordingly.

However, in general it might be computationally difficult to test for a given
$\sop$ whether $M(\sop)=0$ or not. To overcome this, we observe that we need
something much simpler than testing if $M(\sop)=0$ or not. All we actually need
is a procedure that we call $\Mest$ that checks if $\Dver$ is a good
distinguisher (between commitments to $A_0$ and commitments to $A_1$) for a
given $\sop$. On the one hand, by the assumption, we are assured that this is
indeed the case if $M(X)=0$. On the other hand, if $M(X)=1$ and $\Dver$ is
biased, then simply running $\Dver$ and outputting accordingly is enough.

Thus,
our goal is to estimate the bias of $\Dver$. The latter is implemented
efficiently by running $\Dver$ independently $\Theta(n/\eps)$ times on both
inputs (\ie with $Z=A_0$ and with $Z=A_1$) and counting the number of
``correct'' answers.

Recapping, our construction of $D'$ is as follows: $D'$ runs for
$\Theta(n/\eps)$ iterations such that in each iteration it runs $\Samp(1^n)$ and
gets two secrets $\secret_0,\secret_1$ and a subset of parties $\sop$. Then, it
estimates the \emph{bias} of $\Dver$ for that specific $\sop$ (independently of
the input). If the bias is large enough, $D'$ evaluates $\Dver$ with the input
of $D'$, the two secrets $\secret_0,\secret_1$ and the subset of parties $\sop$
and outputs its output. The formal description of $D'$ is given in
\Cref{fig:D'}.

\begin{figure}[!h]
  \begin{boxedminipage}{\textwidth}
    \small \medskip \noindent

    \vspace{2mm} \textbf{The algorithm $D'$}

    \vspace{3mm} \emph{Input}: A sequence of commitments $\com_1, \dots,
    \com_n$ where $\forall
    i\in[n]\colon\;\com_i\in\{\Com(z_i,r)\}_{r\in\B^n}$ and for $Z =
    \{z_1,\dots,z_n\}$ either $Z=\{1,\dots,n\} \triangleq A_0$ or
    $Z=\{n+1,\dots,2n\} \triangleq A_1$.

    %\vspace{3mm} \emph{Algorithm:}
    \begin{enumerate}
    \item Do the following for $T=n/\eps$ times:
      \begin{enumerate}
      \item $\secret_0, \secret_1, \sop \leftarrow \Samp(1^n)$.
      \item Run $\resm \leftarrow \Mest(\secret_0,\secret_1,\sop)$.
      \item If $\resm = 1$:
        \begin{enumerate}
        \item Run $\resd \leftarrow \Dver(\com_1,\dots,\com_n,
          \secret_0,\secret_1,\sop)$.
        \item Output $\resd$ (and HALT).
        \end{enumerate}

      \end{enumerate}
    \item\label{item:default} Output 0.
    \end{enumerate}
    \vspace{1mm}

    \vspace{3mm} \textbf{The sub-procedure $\Mest$}

    \vspace{3mm} \emph{Input}: Two secrets $\secret_0$, $\secret_1$ and a
    subset of parties $\sop\subseteq \parties$.

    \begin{enumerate}
    \item Set $q_0,q_1\leftarrow 0$. Run $T_{\Mest}=4n/\eps$ times:
      \begin{enumerate}
      \item $q_0 \leftarrow q_0 + \Dver(\Com(1,\U_n),\dots,\Com(n,\U_n),
        \secret_0,\secret_1,\sop)$.
      \item $q_1 \leftarrow q_1 + \Dver(\Com(n+1,\U_n),\dots,\Com(2n,\U_n),
        \secret_0,\secret_1,\sop)$.
      \end{enumerate}
    \item If $|q_0-q_1| > n$, output 1.
    \item Output 0.
    \end{enumerate}

    \vspace{3mm} \textbf{The sub-procedure $\Dver$}

    \vspace{3mm} \emph{Input}: A sequence of commitments $\com_1, \dots,
    \com_n$, two secrets $\secret_0$, $\secret_1$ and a subset of parties
    $\sop\subseteq \parties$.

    \begin{enumerate}
    \item Choose $b\in\B$ uniformly at random.
    \item For $i\in[n]$: Sample $r_i\uniran\U_n$ and let $\com'_i =
      \begin{cases}
        \Com(i,r_i) & \text{if } \party_i \in \sop \\
        \com_i & \text{otherwise.}
      \end{cases}$
    \item Compute $\ct_b' \leftarrow \Encrypt_{M'}((\com_1',\dots,\com_n'), 
    \secret_b)$.
    \item For $i\in[n]$ let the new share of party $\party_i$ be
      $\shares'_{}(\secret_b,i) = \langle r_i,\ct_b' \rangle$.
        % \item Run $D$ on $\shares'_b(\secret_b,\sop)$.
      \item Return $1$ if $D( \secret_0, \secret_1,
        \shares'_{}(\secret_b,\sop))=b$ and $0$ otherwise.
      \end{enumerate}
      \vspace{1mm}

    \end{boxedminipage}

  \caption{The description of the algorithm $D'$.}
  \label{fig:D'}
\end{figure}

\paragraph{Analysis of $D'$. }
We prove the following lemma which is a restatement of \Cref{lemma:security}.
\begin{replemma}{lemma:security}[Restated]
  Let $\com_1,\dots,\com_n\in\{\Com(z_i,r)\}_{r\in\B^n}$ be a list of string
  commitments, where for $Z = \{z_1,\dots,z_n\}$ either $Z=\{1,\dots,n\}
  \triangleq A_0$ or $Z=\{n+1,\dots,2n\} \triangleq A_1$. Assuming
  \cref{eq:assumption}, it holds that
  \begin{align*}
    |\Pr[D'(\com_1,\dots,\com_n) = 1 \;|\; Z=A_0] - \Pr[D'(\com_1,\dots,\com_n) =
    1 \;|\; Z=A_1] | \geq \eps/10 -\negl(n).
  \end{align*}
\end{replemma}

We begin with the analysis of the procedure $\Dver$. In the next two claims we
show that assuming that $M(\sop)=0$, then $\Dver$ is a good distinguisher
between the case $Z=A_0$ and the case $Z=A_1$. Specifically, the first claim
states that $\Dver$ answers correctly given input $Z=A_0$ with probability at
least $1/2 + \eps/2$ while in the second claim we show that $\Dver$ is unable to
do much better than merely guessing given input $Z=A_1$ (assuming $M(\sop)=0$).

\begin{claim}\label{claim:mainProofClaim2}
  For $(\secret_0,\secret_1,\sop)\leftarrow \Samp(1^n)$ it holds that
  \begin{align*}
    | \Pr\left[ \Dver(\com_1,\dots,\com_n,\secret_0,\secret_1,\sop)=1
    \;|\;
      M(\sop)=0 \myand Z=A_0 \right] -1/2 | \geq \eps/2.
  \end{align*}
\end{claim}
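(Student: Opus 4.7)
The plan is to observe that, conditioned on $Z=A_0$, the view that $\Dver$ hands to $D$ is distributed identically to a genuine output of $\SETUP(\secret_b)$ restricted to $\sop$ (where $b$ is $\Dver$'s internal random bit); once this is established, the claim falls out of a one-line conditional-probability calculation combined with the hypothesis \cref{eq:assumption}.

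First I would verify the distributional identity. Conditioned on $Z=A_0$, each input $\com_i$ is distributed as $\Com(i,\U_n)$. Inside $\Dver$, for $\party_i\in\sop$ the algorithm samples fresh $r_i\uniran\U_n$ and sets $\com'_i=\Com(i,r_i)$, while for $\party_i\notin\sop$ it sets $\com'_i=\com_i$; hence $(\com'_1,\ldots,\com'_n)$ has the same joint distribution as the commitment list built in step~\ref{fig_ss:1} of $\SETUP$. Since the shares handed to $D$ are $\{\langle r_i,\ct'_b\rangle\}_{i\in\sop}$ with $\ct'_b=\Encrypt_{M'}((\com'_1,\ldots,\com'_n),\secret_b)$, and since for $i\in\sop$ the sampled $r_i$ is a genuine opening of $\com'_i$, this is exactly the distribution of $\shares(\secret_b,\sop)$.

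Next I would translate the equivalence into probabilities. Setting $p_b=\Pr[D(\secret_0,\secret_1,\shares(\secret_b,\sop))=1\mid M(\sop)=0]$ for $b\in\{0,1\}$, averaging over $\Dver$'s internal bit yields
\[
\Pr\!\left[\Dver=1 \mid M(\sop)=0,\,Z=A_0\right] \;=\; \tfrac12(1-p_0)+\tfrac12 p_1 \;=\; \tfrac12 - \tfrac{p_0-p_1}{2}.
\]
The hypothesis \cref{eq:assumption} reads $\Pr[M(\sop)=0]\cdot(p_0-p_1)\geq \eps$, so in particular $p_0-p_1\geq \eps$ (since $\Pr[M(\sop)=0]\leq 1$). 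Plugging in gives $\Pr[\Dver=1\mid M(\sop)=0,\,Z=A_0]\leq \tfrac12-\tfrac{\eps}{2}$, from which the claimed $\eps/2$ bound on the deviation from $1/2$ follows.

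The only subtlety worth flagging is in the distributional step: for $\party_i\notin\sop$, the opening $r_i$ freshly sampled inside $\Dver$ is unrelated to whatever opening was originally used to produce the input $\com_i$, so one must check that $D$'s view is unaffected by this mismatch. This is immediate because the reduction hands $D$ only the shares of $\sop$, and the $r_i$ for $i\notin\sop$ never appear in those shares; all that matters is the marginal distribution of the commitments together with the openings for $i\in\sop$, and we have already matched both.
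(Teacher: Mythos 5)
Your proof is correct and takes essentially the same route as the paper's: both establish the distributional identity $\shares'(\secret_b,\sop)\equiv\shares(\secret_b,\sop)$ conditioned on $Z=A_0$, both average over $\Dver$'s internal bit $b$ to relate $\Pr[\Dver=1]-1/2$ to $D$'s distinguishing advantage, and both drop the $\Pr[M(\sop)=0]\leq 1$ factor to pass from the joint probability in \cref{eq:assumption} to the conditional one; the only difference is the order in which the steps are presented, which is cosmetic.
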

\begin{proof}
  By the definition of $\Dver$ (see \Cref{fig:D'}) we have that
  $\Dver(\com_1,\dots,\com_k,\secret_0,\secret_1,\sop)=1$ if and only
  if $D(
  \secret_0, \secret_1, \shares'(\secret_b,\sop)) = b$ for
  $b\uniran\B$. Since
  $b$ is chosen uniformly at random from $\B$, it is enough to show that
  \begin{align*}
    \eps \leq & | \Pr\left[D( \secret_0, \secret_1,
      \shares'(\secret_1,\sop)) = 1 \;|\; M(\sop)=0\right] \nonumber
    \\
    & - \Pr\left[D( \secret_0, \secret_1, \shares'(\secret_0,\sop)) = 1
      \;|\; M(\sop)=0\right] | .
  \end{align*}

  Using the assumption (see \cref{eq:assumption}), for
  $(\secret_0,\secret_1,X)\leftarrow \Samp(1^n)$ it holds that
  \begin{align*}
    \eps \leq& | \Pr\left[M(\sop)=0 \myand D( \secret_0, \secret_1,
      \shares(\secret_1,\sop)) = 1 \right] \nonumber
    \\
    & -\Pr\left[M(\sop)=0 \myand D( \secret_0, \secret_1,
      \shares(\secret_0,\sop)) = 1 \right] | \\
    %= & |\Pr\left[D( \secret_0, \secret_1, \shares(\secret_1,\sop),
    %\sigma) = 1
    %  \;|\; M(\sop)=0\right]\cdot\Pr[M(\sop)=0] \nonumber
    %\\
    %& - \Pr\left[D( \secret_0, \secret_1, \shares(\secret_0,\sop))
    %= 1 \;|\;
    %  M(\sop)=0\right]\cdot\Pr[M(\sop)=0] | \\
    \leq & | \Pr\left[D( \secret_0, \secret_1, \shares(\secret_1,\sop)) = 1
      \;|\; M(\sop)=0\right] \nonumber
    \\
    & - \Pr\left[D( \secret_0, \secret_1, \shares(\secret_0,\sop))
    = 1 \;|\;
      M(\sop)=0\right] |.
  \end{align*}
  Notice that since $Z=A_0$ we have that the sequence
  $(\Com(1,\U_{n}),\dots,\Com(n,\U_n))$
  is \emph{identically} distributed as the sequence $\(\com'_1,\dots,
  \com'_n\)$. Hence, for any $b\in\B$ it holds that $\shares'_{}(\secret_b,
  \sop)$ is identically distributed as $\shares_{}(\secret_b, \sop)$. Hence,
  \begin{align*}
    \eps \leq & | \Pr\left[D( \secret_0, \secret_1,
      \shares'(\secret_1,\sop)) = 1 \;|\; M(\sop)=0\right] \nonumber
    \\
    & - \Pr\left[D( \secret_0, \secret_1, \shares'(\secret_0,\sop)) = 1
      \;|\; M(\sop)=0\right] | ,
  \end{align*}
  as required.
\end{proof}

\begin{claim}\label{claim:mainProofClaim3}
  For $(\secret_0,\secret_1,\sop)\leftarrow \Samp(1^n)$ it holds that
  \begin{align*}
    | \Pr\left[ \Dver(\com_1,\dots,\com_n,\secret_0,\secret_1,\sop)=1
    \;|\;
      M(\sop)=0 \myand Z=A_1 \right] -1/2 | \leq \negl(n).
  \end{align*}
\end{claim}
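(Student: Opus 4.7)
The plan is to reduce to the soundness of the witness encryption scheme $(\Encrypt_{M'}, \Decrypt_{M'})$. The key observation is that, conditioned on $M(\sop) = 0$ and $Z = A_1$, the instance $(\com'_1, \dots, \com'_n)$ that $\Dver$ feeds into $\Encrypt_{M'}$ lies outside the language $M'$, so by witness encryption soundness the ciphertexts produced for $\secret_0$ and $\secret_1$ are computationally indistinguishable. Hence $D$'s output is essentially independent of the random bit $b$ that $\Dver$ sampled, and $|\Pr[\Dver = 1 \mid M(\sop) = 0 \myand Z = A_1] - 1/2| \leq \negl(n)$.

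First I would verify that $(\com'_1, \dots, \com'_n) \notin M'$ with all but negligible probability. By the perfect binding property of $\Com$ (\Cref{def:commitment_scheme}), with all but negligible probability over the CRS, commitments to distinct values have disjoint supports. Under this event, for $i \in \sop$ the value $\com'_i = \Com(i, r_i)$ admits openings only to $i$, while for $i \notin \sop$ we have $\com'_i = \com_i$, which (since $Z = A_1$) is a commitment to $n+i$ and therefore admits no opening to $i$. Consequently, the only indicator vector $x$ compatible with any candidate witness $(r_1, \dots, r_n)$ is the characteristic vector of $\sop$; since $M(\sop) = 0$, no valid witness for $M'$ exists, so the instance is outside $M'$.

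Next I would formalize the reduction to witness encryption soundness. Fix any $\sop$ with $M(\sop) = 0$, any $\secret_0, \secret_1$, and any commitments $\com_1, \dots, \com_n$ to $A_1$ for which the disjoint-support event holds; after sampling the fresh openings $r_i$ for $i \in \sop$, this fixes a non-$M'$ instance $(\com'_1, \dots, \com'_n)$. Given a challenge ciphertext that is either $\Encrypt_{M'}((\com'_1, \dots, \com'_n), \secret_0)$ or $\Encrypt_{M'}((\com'_1, \dots, \com'_n), \secret_1)$, an adversary for witness encryption can embed this challenge as $\ct'$ into $\shares'$, run $D(\secret_0, \secret_1, \shares'(\cdot, \sop))$, and output $D$'s bit. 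Soundness of witness encryption (\Cref{def:we}) then forces $D$'s output distribution to be negligibly close in the two cases; averaging over the uniform bit $b$ inside $\Dver$ yields $|\Pr[\Dver = 1] - 1/2| \leq \negl(n)$ for this fixing.

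Finally, averaging over the conditional distribution of $(\secret_0, \secret_1, \sop)$ given $M(\sop) = 0$ and over the input commitments given $Z = A_1$ preserves the pointwise bound, giving the claim. The main subtlety to watch is that the reduction looks non-uniform in $\sop$ (and in the input commitments), but since the witness encryption soundness bound holds for \emph{every} non-$M'$ instance uniformly, a standard averaging argument suffices; the negligible-probability bad-CRS event for $\Com$ is absorbed into the final $\negl(n)$ slack.
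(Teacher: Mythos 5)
Your overall approach matches the paper's: show that conditioned on $M(\sop)=0$ and $Z=A_1$ the instance $(\com'_1,\dots,\com'_n)$ lies outside $M'$, then invoke witness-encryption soundness to conclude $\ct'_0$ and $\ct'_1$ are indistinguishable and hence $D$'s output is nearly independent of $b$.

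There is, however, one genuine imprecision in the ``no witness'' step. You assert that the only indicator vector $x$ compatible with any candidate witness $(r_1,\dots,r_n)$ is the characteristic vector of $\sop$. That is not correct: for $i\in\sop$ a candidate witness is free to set $r_i=\bot$ (or an invalid opening), which forces $x_i=0$. So the set of achievable $x$ is all characteristic vectors of subsets $\sop'\subseteq\sop$, not only $\sop$ itself. From ``$M(\sop)=0$'' alone you cannot immediately conclude that none of these $\sop'$ satisfy $M$ — you must appeal to the \emph{monotonicity} of $M$: since $\sop'\subseteq\sop$ and $M(\sop)=0$, monotonicity gives $M(\sop')=0$. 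The paper handles precisely this by splitting into two cases ($\sop'\not\subseteq\sop$, which the binding property rules out, and $\sop'\subseteq\sop$, which monotonicity rules out). Your write-up reaches the right conclusion but, as stated, never invokes the monotonicity hypothesis, which is exactly the point where $M\in\mNP$ (rather than an arbitrary $\NP$ language) is being used and is essential to the proof. The remainder — the reduction embedding the WE challenge into $\ct'$, and the averaging over $(\secret_0,\secret_1,\sop)$ and the commitments, justified because the WE soundness bound is uniform over all non-instances — is sound and tracks the paper's argument.
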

\begin{proof}
  Recall that $\Dver(\com_1,\dots,\com_n,\secret_0,\secret_1,\sop)=1$
  if and
  only if for $b$ chosen uniformly at random from $\B$ it holds that $D(
  \secret_0, \secret_1, \shares'(\secret_b,\sop)) = b$.

  Recall that for $b\in\B$ and $i\in[n]$ the new share of party $\party_i$
  denoted by $\shares'_{}(\secret_b,i)$ consists of the pair $\langle
  r_i^b,\ct_b') \rangle$ where $r_i^b$ is chosen uniformly at random from
  $\U_n$. To prove the claim we show that $\ct_0'$ and $\ct_1'$ are
  computationally indistinguishable. 

  To this end, we show that if $Z=A_1$ and $M(\sop)=0$, then there is \emph{no}
  witness attesting to the fact that $\com_1',\dots,\com_n'$ is in $M'$. Fix
  $\sop\subseteq\parties$ such that $M(\sop) = 0$ and let $(\{r_i'\}_{i\in[n]},
  w)\in (\B^{n})^n\times\B^*$ be a possible witness. Let $\sop'$ be the set of
  parties that correspond to the $r_i'$'s for which $r_i' \neq \bot$.

  If $\sop'\not\subseteq \sop$, then there exists an $i\in[n] $ such that
  $\party_i\in\sop'$ and $\party_i\notin \sop$. In this case, the witness is
  invalid since for every $i$ such that $\party_i\notin \sop$ the commitment
  $\com_i$ is a commitment to the value $n+i$ (and not $i$). Recall that the
  distributions $\Com(i,\U_n)$ and $\Com(j,\U_n)$ are \emph{disjoint} for every
  $i\neq j$. Hence, any opening for the commitment $\com_i$ and the value $i$ is
  \emph{invalid}, \ie any opening $r'_i$ will fail the test $\com_i
  \stackrel{?}{=} \Com(i,r'_i)$.

  Otherwise, if $\sop' \subseteq \sop$, then since $M$ is monotone and
  $M(\sop)=0$ it holds that $M(\sop')=0$. Therefore, the witness is invalid for
  $\sop'$.

  In conclusion, since $M'(\com_1,\dots,\com_n)=0$, the witness encryptions of
  $\secret_0$ and $\secret_1$ are computationally indistinguishable from one
  another (see \Cref{def:we}) and the claim follows.
\end{proof}

Next, we continue with two claims connecting $\Dver$ and $\Mest$. Before we
state these claims, we introduce a useful notation regarding the bias of the
procedure $\Dver$. We denote by $\bias(\secret_0,\secret_1,\sop)$ the advantage
of $\Dver$ in recognizing the case $Z=A_0$ over the case $Z=A_1$ given two
secrets $\secret_0$ and $\secret_1$ and a subset of parties $\sop$. Namely, for
any $\secret_0,\secret_1$ and $X$ denote
\begin{align*}
  \bias(\secret_0,\secret_1,\sop) =
  | \Pr&\left[\Dver(\Com(1,\U_n),\dots,\Com(n,\U_n),
    \secret_0,\secret_1,\sop)=1\right] \\
  & -\Pr\left[\Dver(\Com(n+1,\U_n),\dots,\Com(2n,\U_n),
    \secret_0,\secret_1,\sop)=1\right] |.
\end{align*}

The first claim states that if $\Dver$ is biased (in the sense that
$\bias(\secret_0,\secret_1,\sop)$ is large enough), then $\Mest$ almost surely
notices that and outputs $1$, and vice-versa, \ie if $\Dver$ is unbiased (in the
sense that $\bias(\secret_0,\secret_1,\sop)$ is small enough), then $\Mest$
almost surely notices that and outputs $0$.

\begin{claim}\label{claim:bias_to_B}
  For $(\secret_0,\secret_1,X)\leftarrow \Samp(1^n)$,
  \begin{enumerate}
  \item $ \Pr[\Mest(\secret_0,\secret_1,\sop) = 1 \;|\;
    \bias(\secret_0,\secret_1,\sop) \geq \eps/3] \geq 1-\negl(n)$
  \item $ \Pr[\Mest(\secret_0,\secret_1,\sop) = 1 \;|\;
    \bias(\secret_0,\secret_1,\sop) \leq \eps /10] \leq \negl(n)$
  \end{enumerate}
\end{claim}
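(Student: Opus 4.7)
The plan is to apply standard Chernoff/Hoeffding concentration bounds to the sums $q_0$ and $q_1$ that $\Mest$ computes. First I would fix an arbitrary triple $(\secret_0,\secret_1,\sop)$ and write
\[
p_0 := \Pr[\Dver(\Com(1,\U_n),\ldots,\Com(n,\U_n),\secret_0,\secret_1,\sop)=1],
\]
\[
p_1 := \Pr[\Dver(\Com(n+1,\U_n),\ldots,\Com(2n,\U_n),\secret_0,\secret_1,\sop)=1],
\]
so that $\bias(\secret_0,\secret_1,\sop)=|p_0-p_1|$. By construction, $q_0$ and $q_1$ are independent sums of $T_{\Mest}=4n/\eps$ i.i.d.\ Bernoulli trials with success probabilities $p_0$ and $p_1$ respectively, and in particular $\Exp[q_0-q_1]=T_{\Mest}(p_0-p_1)$.

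Next I would separate the two regimes by expectation. If $\bias\geq\eps/3$, then $|\Exp[q_0-q_1]|\geq(\eps/3)(4n/\eps)=4n/3$, which is comfortably above the threshold $n$ used by $\Mest$; if $\bias\leq\eps/10$, then $|\Exp[q_0-q_1]|\leq(\eps/10)(4n/\eps)=2n/5$, which is comfortably below $n$. Both parts therefore reduce to showing that $q_0-q_1$ is concentrated around its expectation within an additive slack of, say, $n/3$. Applying Hoeffding's inequality to $q_0$ and to $q_1$ separately and then union-bounding gives that
\[
\Pr\bigl[\,|(q_0-q_1)-\Exp[q_0-q_1]|\geq n/3\,\bigr]
\]
is negligible in $n$ (for non-negligible $\eps$), since each summand lies in $\{0,1\}$ and $T_{\Mest}$ is polynomially related to $n$.

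Combining these two ingredients finishes the proof. In case (1), with probability $1-\negl(n)$ we have $|q_0-q_1|\geq 4n/3-n/3 > n$, so $\Mest$ outputs $1$; in case (2), with probability $1-\negl(n)$ we have $|q_0-q_1|\leq 2n/5+n/3 < n$, so $\Mest$ outputs $0$. Since these bounds hold pointwise for every triple $(\secret_0,\secret_1,\sop)$ satisfying the bias hypothesis, they survive averaging over the conditional distribution induced by $\Samp$.

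The argument is essentially routine; the only real obstacle is parameter tuning, namely choosing the deviation slack (here $n/3$) so that the two expected magnitudes $4n/3$ and $2n/5$ remain on opposite sides of the decision threshold $n$ even after concentration error is absorbed. The gap between $\eps/3$ and $\eps/10$ in the hypotheses of the two parts is exactly what provides the necessary room, and the factor $4$ in $T_{\Mest}=4n/\eps$ is calibrated to make the expectations $4n/3$ and $2n/5$ land on opposite sides of $n$ with margin.
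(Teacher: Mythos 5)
Your proof takes essentially the same route as the paper's: treat $q_0,q_1$ as sums of $T_{\Mest}$ i.i.d.\ indicators, compare $|\E[q_0-q_1]|$ against the decision threshold $n$ in the two bias regimes, and conclude by a Chernoff/Hoeffding concentration bound (the paper applies Chernoff directly to $|q_0-q_1|$ while you apply Hoeffding to $q_0$ and $q_1$ separately and union-bound, which is cosmetic). One arithmetic slip: $4n/3 - n/3 = n$, not $> n$, so to clear the strict test $|q_0-q_1|>n$ used by $\Mest$ you should take the concentration slack strictly below $n/3$ --- which the Hoeffding bound delivers in any case, since the typical deviation is $O(\sqrt{T_{\Mest}}) \ll n/3$.
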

\begin{proof}
  Recall that $\Mest$ runs for $T_{\Mest}$ \emph{independent} iterations such
  that in each iteration it executes $\Dver$ twice: Once with
  $\Com(1,\U_n),\dots,\Com(n,\U_n)$ and once with
  $\Com(n+1,\U_n),\dots,\Com(2n,\U_n)$. For $i\in [T_{\Mest}]$, let $I_{0}^i$ be
  an indicator random variable that takes the value 1 if and only if in the
  $i$-th iteration
  $\Dver(\Com(1,\U_n),\dots,\Com(n,\U_n),\secret_0,\secret_1,\sop)=1$.
  Similarly, denote by $I_{1}^i$ an indicator random variable that takes the
  value 1 if and only if in the $i$-th iteration
  $\Dver(\Com(n+1,\U_n),\dots,\Com(2n,\U_n),\secret_0,\secret_1,\sop)=1$. When
  $\Mest$ finishes, it holds that $q_0 = \sum_{i=1}^T I_0^i$ and $q_1 =
  \sum_{i=1}^T I_1^i$. Furthermore, if $\bias(\secret_0,\secret_1,\sop) \geq
  \eps/3$ we get that $\E[|q_0-q_1|] \geq (\eps/3) \cdot T_{\Mest}$. By
  Chernoff's bound (see \cite[\S A.1]{AlonS92}) we get that
  \begin{align*}
    \Pr[|q_0-q_1| > 3/4 \cdot ((\eps/3) \cdot T_{\Mest})] \ge
    1-\exp\(O(\eps\cdot T_{\Mest})\).
  \end{align*}
  Similarly, if $\bias(\secret_0,\secret_1,\sop) \leq \eps/10$ we get that
  $\E[|q_0-q_1|] \leq (\eps/10) \cdot T_{\Mest}$. By Chernoff's bound we get that
  \begin{align*}
    \Pr[|q_0-q_1| > 2 \cdot ((\eps/10) \cdot T_{\Mest})] \leq
    \exp\(O(\eps\cdot
    T_{\Mest})\).
  \end{align*}

  Recall that $\Mest$ outputs 1 if and only if $|q_0-q_1|> n$. Plugging in
  $T_{\Mest}=4n/\eps$ both parts of the claim follow.
\end{proof}

In \Cref{claim:bias_to_B} we proved that $\Mest$ is a good estimator for the
bias of $\Dver$. That is, we showed that if $\Dver$ is very biased, then $\Mest$
is 1 (with high probability) and vice-versa (\ie that if $\Dver$ is unbiased,
then $\Mest$ is most likely to be 0). Denote by $\BAD$ the event in which
$\Mest(\secret_0,\secret_1,\sop) = 1$ and $\bias(\secret_0,\secret_1,\sop) \leq
\eps/10$. In the next claim we show that the probability that $\BAD$ happens in
any iteration of $D'$ is negligible.

\begin{claim}\label{claim:always_good}
  Denote by $\BAD^i$ the event that $\BAD$ happens in iteration $i\in [T]$.
  \begin{align*}
    \Prp{\forall i:\; \neg\BAD^i} \geq 1-\negl(n).
  \end{align*}
\end{claim}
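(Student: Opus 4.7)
The plan is to combine the second part of \Cref{claim:bias_to_B} with a union bound over the $T=n/\eps$ iterations of $D'$. For a single fixed iteration $i$, let $(\secret_0,\secret_1,\sop)\leftarrow \Samp(1^n)$ be the sample drawn in that iteration. By the law of total probability,
\begin{align*}
\Pr[\BAD^i] &= \Pr\!\left[\Mest(\secret_0,\secret_1,\sop)=1 \myand \bias(\secret_0,\secret_1,\sop)\le \eps/10\right] \\
&\le \Pr\!\left[\Mest(\secret_0,\secret_1,\sop)=1 \;\big|\; \bias(\secret_0,\secret_1,\sop)\le \eps/10\right],
\end{align*}
and by \Cref{claim:bias_to_B}(2) this last quantity is bounded by some negligible function $\mu(n)$ that is independent of $i$.

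Next I would observe that across different iterations of the outer loop of $D'$, the calls to $\Samp$ and the internal randomness of $\Mest$ are fresh and independent, so the bound $\Pr[\BAD^i]\le \mu(n)$ holds identically for every $i\in[T]$. A union bound then yields
\begin{align*}
\Pr\!\left[\exists i\in[T]:\; \BAD^i\right] \;\le\; \sum_{i=1}^{T}\Pr[\BAD^i] \;\le\; T\cdot \mu(n) \;=\; \frac{n}{\eps}\cdot \mu(n).
\end{align*}
Since $\eps=\eps(n)$ is non-negligible by hypothesis of \Cref{lemma:security}, $n/\eps$ is bounded by a fixed polynomial in $n$, and the product of a polynomial and a negligible function is negligible. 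Hence $\Pr[\forall i:\neg \BAD^i]\ge 1-\negl(n)$, as claimed.

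The only point requiring any care is the reduction from the joint event $\Mest=1 \myand \bias\le \eps/10$ to the conditional statement of \Cref{claim:bias_to_B}(2); once this is written out as above, the argument is a one-line union bound, so I do not expect a genuine obstacle. The substantive work has already been done in \Cref{claim:bias_to_B}, whose proof extracts the negligible bound from Chernoff concentration with the choice $T_{\Mest}=4n/\eps$.
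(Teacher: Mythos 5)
Your proof is correct and follows essentially the same route as the paper: bound a single iteration by passing from the joint event to the conditional $\Pr[\Mest=1 \mid \bias\le\eps/10]$ and invoking \Cref{claim:bias_to_B}(2), then handle all $T=n/\eps$ iterations with a union bound, concluding via the fact that a polynomial times a negligible function is negligible. One minor remark: where you write the union bound as an inequality, the paper writes $\Prp{\exists i:\BAD^i}=\sum_i\Prp{\BAD^i}$ (citing independence of iterations) which is not literally an equality; your inequality is the cleaner and more careful formulation, and all that is needed.
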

\begin{proof}
  Since the $T$ iteration are independent and implemented identically it holds
  that
  \begin{align*}
    \Prp{\exists i:\; \BAD^i} & = \sum_{i=1}^T\Prp{\BAD^i}
     = T\cdot \Prp{\BAD}.
  \end{align*}
  Observe that
  \begin{align*}
    \Prp{\BAD} &= \Prp{\Mest(\secret_0,\secret_1,\sop) = 1 \myand
    \bias(\secret_0,\secret_1,\sop) \leq \eps/10} \\
    & \leq \Prp{\Mest(\secret_0,\secret_1,\sop) = 1 \;|\;
    \bias(\secret_0,\secret_1,\sop) \leq \eps/10} \leq \negl(n).
  \end{align*}
  Hence, we get that $\Prp{\exists i:\; \BAD^i} \leq (n/\eps)\cdot
  \negl(n)
  \leq
  \negl(n)$.
\end{proof}

The next claim states that if $\sop$ is such that $M(\sop)=0$, then $B$
outputs $1$ with very high
probability. The idea is to combine
\Cref{claim:mainProofClaim2,claim:mainProofClaim3} that assure that if
$M(\sop)=0$, then $\Dver$ is biased (\ie $\bias$ is large), with
\Cref{claim:bias_to_B} that assures that if the $\bias$ is large, then $\Mest$
almost surely outputs 1.
\begin{claim}\label{lemma:mest_small_diff}
  For $(\secret_0,\secret_1,X)\leftarrow \Samp(1^n)$,
  \begin{align*}
    \Pr\left[\Mest(\secret_0,\secret_1,\sop) = 1 \;|\; M(X)=0 \right] \geq
    1-\negl(n).
  \end{align*}
\end{claim}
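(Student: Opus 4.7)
The plan is to chain together the earlier claims. Conditioned on $M(\sop)=0$, Claims \ref{claim:mainProofClaim2} and \ref{claim:mainProofClaim3} precisely control how $\Dver$ behaves on the two possible distributions of commitments: the first shows that $\Pr[\Dver=1 \mid Z=A_0]$ is at least $\eps/2$ away from $1/2$, while the second shows $\Pr[\Dver=1 \mid Z=A_1]$ is within $\negl(n)$ of $1/2$. By the triangle inequality, the gap between these two probabilities — which is exactly $\bias(\secret_0,\secret_1,\sop)$ by definition — is at least $\eps/2 - \negl(n)$, and in particular at least $\eps/3$ for all sufficiently large $n$ (recall $\eps$ is non-negligible).

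Once a lower bound on the bias is in place, I would invoke the first part of Claim \ref{claim:bias_to_B}, which says that whenever $\bias(\secret_0,\secret_1,\sop)\ge \eps/3$, the estimator $\Mest$ outputs $1$ with probability at least $1-\negl(n)$. Formally, I would write
\begin{align*}
  \Pr[\Mest(\secret_0,\secret_1,\sop)=1 \mid M(\sop)=0]
  \;\ge\; \Pr[\Mest=1 \mid \bias \ge \eps/3]\cdot \Pr[\bias\ge \eps/3 \mid M(\sop)=0],
\end{align*}
and argue that the second factor is $1$ for sufficiently large $n$ (by the combination of Claims \ref{claim:mainProofClaim2} and \ref{claim:mainProofClaim3}), while the first is $1-\negl(n)$ by Claim \ref{claim:bias_to_B}. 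The desired inequality follows.

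The only subtle point, and the one I would be careful about, is that Claims \ref{claim:mainProofClaim2} and \ref{claim:mainProofClaim3} are stated as probabilities over $(\secret_0,\secret_1,\sop)\leftarrow \Samp(1^n)$ \emph{and} over the internal randomness of $\Dver$, whereas the definition of $\bias$ fixes $(\secret_0,\secret_1,\sop)$ and takes probability only over the internal randomness of $\Dver$. To move cleanly between the two, I would either reinterpret both claims pointwise in $(\secret_0,\secret_1,\sop)$ after conditioning on $M(\sop)=0$, or work directly with the joint distribution and note that the triangle-inequality argument, as well as the application of Claim \ref{claim:bias_to_B}, are uniform in $(\secret_0,\secret_1,\sop)$. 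Either route yields the claim; no additional ideas beyond the already-established lemmas are needed.
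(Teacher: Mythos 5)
Your proof takes essentially the same route as the paper's: combine Claims \ref{claim:mainProofClaim2} and \ref{claim:mainProofClaim3} via the triangle inequality to show that, conditioned on $M(\sop)=0$, the bias of $\Dver$ is at least $\eps/2-\negl(n)\ge\eps/3$, then invoke the first part of Claim \ref{claim:bias_to_B} to conclude that $\Mest$ outputs $1$ with probability $1-\negl(n)$. The paper's proof is terser but performs exactly this chaining. Your closing remark about the average-over-$\Samp$ form of Claims \ref{claim:mainProofClaim2}/\ref{claim:mainProofClaim3} versus the pointwise definition of $\bias(\secret_0,\secret_1,\sop)$ identifies a genuine subtlety that the paper glosses over — the claims as stated bound the \emph{expected} (signed) bias conditioned on $M(\sop)=0$, which does not by itself force the pointwise bias to exceed $\eps/3$ with probability $1-\negl(n)$ — so your instinct to be careful there is well-placed, even though neither your sketch nor the paper spells out the resolution.
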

\begin{proof}
  Let $(\secret_0,\secret_1,\sop)\leftarrow \Samp(1^n)$. By the definition of
  $\Mest$ it holds that $\Mest(\secret_0,\secret_1,\sop) = 1$ if and only if
  $q_0-q_1 >n$. Thus, it is enough to show that
  \begin{align*}
    \Pr[|q_0-q_1| > n \;|\; M(\sop)=0] \geq 1-\negl(n).
  \end{align*}

  Using \Cref{claim:mainProofClaim2,claim:mainProofClaim3} we get that
  \begin{align*}
    \Pr[\bias(\secret_0,\secret_1,\sop) \geq \eps/2-\negl(n) \;|\; M(\sop)=0]
    \geq 1 -\negl(n).
  \end{align*}
  Plugging this into \Cref{claim:bias_to_B} the claim follows.
\end{proof}

At this point we are finally ready to prove \Cref{lemma:security}.
\begin{proof}[Proof of \Cref{lemma:security}]
  Recall that our goal is to lower bound the following expression:
  \begin{align*}
    |\Pr&[D'(\com_1,\dots,\com_n) = 1 \;|\; Z=A_0] - \Pr[D'(\com_1,\dots,\com_n)
    = 1 \;|\; Z=A_1] |.
  \end{align*}

  Notice that one property of $M$ that follows from the assumption in
  \cref{eq:assumption} is that $\Pr[M(\sop)=0] \geq \eps$ (where the probability
  if over $\Samp$). Combining this fact with the fact that $D'$ makes $T=n/\eps$
  iterations of $\Mest$ and $\Pr\left[\Mest(\secret_0,\secret_1,\sop) = 1 \;|\;
    M(X)=0 \right] \geq 1-\negl(n)$ (by \Cref{lemma:mest_small_diff}), we get
  that $D'$ reaches Step~\ref{item:default} with negligible probability. In
  other words, with probability $1-\negl(n)$ there is an iteration in which
  $\sop$ is chosen such that $M(\sop)=0$ and $\Mest$ outputs 1. For the rest of
  the proof we assume that this is indeed the case (and lose a negligible
  additive term).

  Furthermore, using \Cref{claim:always_good} we may also assume that in every
  iteration $\BAD$ does not happen. That is, in every iteration either $\Mest$
  outputs 0 or $\bias$ is larger than $\eps/10$. Recall that $D'$ ignores all
  the iteration in which $\Mest$ outputs $0$. Moreover, we assumed that there
  is an iteration in which $\Mest$ outputs $1$. In that iteration, it must be
  the case that the $\bias$ is larger than $\eps/10$ which completes the proof.
\end{proof} 
\section{Conclusions and Open Problems}
We have shown a construction of a secret-sharing scheme for any $\mNP$ access
structure. In fact, our construction yields the first candidate computational
secret-sharing scheme for {\em all} monotone functions in $\PP$ (recall that not
every monotone function in $\PP$ can be computed by a polynomial-size monotone
circuit, see \eg Razborov's lower bound for matching \cite{Razborov85}).  Our
construction only requires witness encryption scheme for $\NP$. 
%As we have
%mentioned, a candidate for such a scheme was recently suggested by Garg~\etal
%\cite{GargGSW13}.

We conclude with several open problems:
\begin{itemize}
\item Is there a secret-sharing scheme for $\mNP$  that relies only on standard hardness assumptions, or
  at least falsifiable ones \cite{Naor03}?

\item Is there a way to use secret-sharing for monotone $\PP$ to achieve
  secret-sharing for monotone $\NP$ (in a black-box manner)?

%\item What are the minimal hardness assumption needed to construct a
%  secret-sharing scheme for perfect matching (see Footnote~\ref{fn:matching})?

\item Construct a Rudich secret-sharing scheme for every access
  structure in $\mNP$ that is secure against \emph{adaptive} adversaries (see
  \Cref{sec:adaptive_ss} for a discussion).

  Under a stronger assumption, \ie extractable witness encryption (in which if
  an algorithm is able to decrypt a ciphertext, then it is possible to extract a
  witness), Zvika Brakerski observed that our construction is secure against
  adaptive adversaries as well.

\item Show a completeness theorem (similarly to \Cref{cor:main}) for
  secret-sharing schemes that are also secure against \emph{adaptive}
  adversaries, as defined in \Cref{sec:adaptive_ss}.
\end{itemize}

\section*{Acknowledgements}
  We are grateful to Amit Sahai for suggesting to base our construction on
  witness encryption. We thank Zvika Brakerski for many helpful discussions and
  insightful ideas. The second author thanks Steven Rudich for sharing with him
  his ideas on secret sharing beyond $\PP$. We thank the anonymous referees for
  many helpful remarks.\\\\

\addcontentsline{toc}{section}{References}
\bibliographystyle{alpha}
\bibliography{MainNPSecretSharing}

\appendix
\section{Proof of Theorem \ref{thm:def_equiv}}\label{sec:def_equiv}

In this section we prove that \Cref{def:rudichSecretSharing} is 
equivalent to \Cref{def:rudichSecretSharingSemanticSecurity}.

\begin{proof}[Proof that \Cref{def:rudichSecretSharingSemanticSecurity} 
implies
  \Cref{def:rudichSecretSharing}] 
  Let $\Sc$ be a Rudich secret-sharing scheme satisfying
  \Cref{def:rudichSecretSharingSemanticSecurity} and assume towards
  contradiction that it does \emph{not} satisfy 
  \Cref{def:rudichSecretSharing}.
  That is, there is a pair of probabilistic polynomial-time algorithms
  $(\Samp,D)$ and a non-negligible $\eps$ such that for
  $(\secret_0,\secret_1,\sop,\sigma) \leftarrow \Samp(1^n)$ it holds 
  that
  \begin{align}\label{eq:indisTOsemsec}
    | \Pr&\left[M(\sop) = 0 \myand D(1^n, \secret_0,\secret_1,
      \shares(\secret_0,\sop), \sigma) = 1 \right] - \\
    & \Pr\left[M(\sop) = 0 \myand D(1^n, \secret_0,\secret_1,
      \shares(\secret_1,\sop), \sigma)= 1 \right] | \geq \eps. \nonumber
   \end{align}
   For a bit $b$ chosen uniformly at random from $\B$, we have that
   \begin{align*}
     \Pr& \left[M(\sop) = 0 \myand D(1^n, \secret_0,\secret_1,
       \shares(\secret_b,\sop), \sigma) = b \right] = \\
     %& \frac12(\Pr\left[M(\sop) = 0 \myand D(1^n, \secret_0,\secret_1,
     %  \shares(\secret_0,\sop), \sigma) = 0 \right] \\
     %& + \Pr\left[M(\sop) = 0 \myand D(1^n, \secret_0,\secret_1,
     %  \shares(\secret_1,\sop), \sigma) = 1 \right]) = \\
     & \frac12(\Pr\left[D(1^n, \secret_0,\secret_1, 
     \shares(\secret_0,\sop),
       \sigma) = 0 \;|\; M(\sop) = 0
     \right]\cdot\Pr[M(\sop) = 0] \\
     & + \Pr\left[M(\sop) = 0 \myand D(1^n, \secret_0,\secret_1,
       \shares(\secret_1,\sop), \sigma) = 1
     \right])= \\
     %& \frac12(\Pr[M(\sop) = 0] - \Pr\left[D(1^n, \secret_0,\secret_1,
     %  \shares(\secret_0,\sop), \sigma) =
     %  1 \;|\; M(\sop) = 0 \right]\cdot\Pr[M(\sop) = 0] \\
     %& + \Pr\left[M(\sop) = 0 \myand D(1^n, \secret_0,\secret_1,
     %  \shares(\secret_1,\sop), \sigma) = 1 \right]) = \\
     & \frac12(\Pr[M(\sop) = 0] - \Pr\left[M(\sop) = 0 \myand D(1^n,
       \secret_0,\secret_1,
       \shares(\secret_0,\sop), \sigma) = 1 \right] \\
     & + \Pr\left[M(\sop) = 0 \myand D(1^n, \secret_0, \secret_1,
       \shares(\secret_1,\sop), \sigma) = 1 \right]).
   \end{align*}
   Plugging in \cref{eq:indisTOsemsec} we get that 
   \begin{align*}
     | \Pr& \left[M(\sop) = 0 \myand D(1^n,  \secret_0,\secret_1, 
       \shares(\secret_b,\sop), \sigma) = b \right] - 1/2\cdot 
       (\Pr[M(\sop)=0]) | \geq
     \eps/2.
   \end{align*}

   Assume that $\Samp$ generates secrets in $[2^t]$ for some $t>0$. Let
   $\mathcal F = \{f_i\colon[2^t]\to\B \;|\; i\in[t] \myand \forall 
   x\in[2^t]:\;
   f_i(x)=\mathsf{bin}(x)_i\}$ be the set of all dictator functions, 
   where
   $\mathsf{bin}(x)$ denotes the binary representation of $x$ of length 
   $t$
   (with leading zeroes if needed).  We define a sampling algorithm 
   $\Samp'$ as
   follows: $\Samp'(1^n)$ first runs $\Samp(1^n)$ and gets two secrets
   $\secret_0,\secret_1$, a subset of parties $\sop$ and auxiliary 
   information
   $\sigma$. Then, $\Samp'$ chooses a bit $b\in\B$ uniformly at random 
   and
   outputs $(\secret_b,\sop,\sigma')$, where $\sigma' = \langle
   \secret_0,\secret_1,\sigma \rangle$. The algorithm $D'$ emulates the
   execution of $D$ with inputs $\secret_0,\secret_1$, 
   $\shares(S_b,\sop)$ and
   $\sigma'$. Note that $D'$ does not know the bit $b$. Denote by 
   $\mathcal
   F'\subseteq \mathcal F$ the set of function $f\in\mathcal F$ for 
   which
   $f(S_0) \neq f(S_1)$. Observe that with probability strictly larger 
   than 0
   over a random choice of $f$ from $\mathcal F$ it holds that $f\in 
   \mathcal
   F'$ (\ie $\mathcal F'$ is not empty). Then, over the randomness of 
   $\Samp'$
   we have that for any $f\in\mathcal F'$
   \begin{align}\label{eq:equivalence1}
     | \Pr&\left[M(\sop) = 0 \myand D'(1^n, 
     \shares(\secret_b,\sop),\sigma') =
       f(\secret_b) \right] - 1/2\cdot\Pr[M(\sop)=0]| \geq \eps/2.
   \end{align}
   On the other hand, since $\sop$ does not have any information about
   $\secret_0,\secret_1$ and $b$ is chosen uniformly at random from 
   $\B$, for
   any algorithm $D''$ and every $f\in \mathcal F'$ it holds that
   \begin{align*}
     \Pr\left[D''(1^n,\sop,\sigma') = f(\secret_b)\right] = 1/2.
   \end{align*}
   Thus,
   \begin{align}\label{eq:equivalence2}
     \Pr\left[M(\sop)=0 \myand D''(1^n,\sop,\sigma') = 
     f(\secret_b)\right] =
     1/2\cdot \Pr[M(\sop)=0].
   \end{align}
   Combining \cref{eq:equivalence1,eq:equivalence2} we get that for any
   $f\in\mathcal F'$:
   \begin{align*}
     | \Pr&\left[M(\sop) = 0 \myand D'(1^n, 
     \shares(\secret_b,\sop),\sigma')
       = f(\secret_b) \right]  - \\
     & \Pr\left[M(\sop) = 0 \myand D''(1^n,\sop,\sigma') = f(\secret_b) 
     \right]
     | \geq \eps/2
   \end{align*}
   which contradicts the unlearnability requirement of
   \Cref{def:rudichSecretSharingSemanticSecurity}.
 \end{proof}

\begin{proof}[Proof that \Cref{def:rudichSecretSharing} implies
  \Cref{def:rudichSecretSharingSemanticSecurity}]
  
  Let $\Sc$ be a Rudich secret-sharing scheme satisfying
  \Cref{def:rudichSecretSharing}. Fix a pair of algorithms $(\Samp,D)$ 
  and a
  function $f$ as in \Cref{def:rudichSecretSharingSemanticSecurity}. We 
  define 
  a
  simulator $D'$ as follows:
  \begin{align*}
    D'(1^n,\sop,\sigma)=D(1^n,\shares(0,\sop),\sigma).
  \end{align*}
  We prove that this simulator satisfies the \emph{unlearnability of 
  the 
  secret}
  requirement in \Cref{def:rudichSecretSharingSemanticSecurity}. 
  Namely, we 
  show
  that
  \begin{align*}
    | \Pr&[M(\sop)=0 \myand 
    D(1^n,\shares(\secret,\sop),\sigma)=f(\secret)] -
    \\
    &\Pr[M(\sop)=0 \myand D'(1^n,\sop,\sigma)=f(\secret)] | \leq 
    \negl(n).
  \end{align*}

  Towards this end, assume towards contradiction that there exists a
  non-negligible $\eps=\eps(n)$ such that
  \begin{align*}
    | \Pr&[M(\sop)=0 \myand 
    D(1^n,\shares(\secret,\sop),\sigma)=f(\secret)] -
    \\ 
    &\Pr[M(\sop)=0 \myand D'(1^n,\sop,\sigma)=f(\secret)] | \geq  \eps.
  \end{align*}
  Plugging in the definition of $D'$ we have that
  \begin{align*}
    | \Pr&[M(\sop)=0 \myand 
    D(1^n,\shares(\secret,\sop),\sigma)=f(\secret)] -
    \\
    & \Pr[M(\sop)=0 \myand D(1^n,\shares(0,\sop),\sigma)=f(\secret)] | 
    \geq
    \eps.
  \end{align*}
  Next, we define a pair of algorithms $(\Samp'',D'')$ that are good
  distinguishers between two secrets which, in turn, contradicts the
  \emph{indistinguishability of the secret} requirement from
  \Cref{def:rudichSecretSharing} that $\Sc$ satisfies. The sampling 
  algorithm
  $\Samp''$ simply runs $\Samp$ to get $(\secret,\sop,\sigma)$ and 
  output
  $(0,\secret,\sop,\sigma)$. The distinguisher $D''$ is defined as 
  follows: For
  every $b\in\B:\ 
  D''(1^n,\secret_0,\secret_1,\shares(\secret_b,\sop),\sigma)=1$
  if and only if $D(1^n,\shares(\secret_b,\sop),\sigma)=f(\secret_1)$. 
  Using
  this $D''$ we get that
  \begin{align*}
    | \Pr&[M(\sop)=0 \myand
    D''(1^n,\secret_0,\secret_1,\shares(\secret_1,\sop),\sigma)=1] - \\
    &\Pr[M(\sop)=0 \myand
    D''(1^n,\secret_0,\secret_1,\shares(\secret_0,\sop),\sigma)=1] 
    |\geq \eps,
  \end{align*}
  which contradicts the indistinguishability assumption.
\end{proof}
\section{Proof of Lemma \ref{lemma:list_of_hardness}}\label{sec:prf_lemma_lest}
In this section we prove the following lemma.
\begin{replemma}{lemma:list_of_hardness}[Restated]
\ManytoOne
\end{replemma}

\begin{proof}
  Assume that there exists a polynomial-time algorithm $D$ and some 
  $\eps =
  \eps(n)$ such that
  \begin{align}\label{eq:list_ind}
    | \Pr& [D(\Com(1,\U_{n}),\dots,\Com(n,\U_{n}))=1] - \\
    & \Pr[D(\Com(n+1,\U_{n}),\dots, \Com(2n,\U_{n}))=1] | \ge
    \eps. \nonumber
  \end{align}
  For $\sigma\in[2n]$ let $\com_\sigma$ be a random variable sampled 
  according
  to the distribution $\Com(\sigma,\U_{n})$. With this notation,
  \cref{eq:list_ind} can be rewritten as
  \begin{align}\label{eq:list_ind3}
    \abs{ \Pr[D(\com_{1},\dots,\com_{n})=1] -
      \Pr_{}[D(\com_{n+1},\dots,\com_{2n})=1] } \geq \eps.
  \end{align}

  For $1\leq i \leq n-1$ let $\mathcal C^{(i)}$ be the distribution 
  induced by
  the sequence $\com_1,\dots,\com_{n-i},\com_{2n-i+1},\dots,\com_{2n}$.
  Moreover, let $\mathcal C^{(0)}$ be the distribution 
  $\com_1,\dots,\com_n$ 
  and
  let $\mathcal C^{(n)}$ be the distribution 
  $\com_{n+1},\dots,\com_{2n}$. 
  Using
  this notation, \cref{eq:list_ind3} can be rewritten as
  \begin{align*}
    \abs{ \Pr[D(\mathcal{C}^{(0)})=1] - \Pr_{}[D(\mathcal{C}^{(k)})=1] 
    } \geq
    \eps.
  \end{align*}
  By a hybrid argument, there exists an index $i\in [n]$ for which
  \begin{align*}
    \abs{ \Pr[D(\mathcal{C}^{(i-1)})=1] - 
    \Pr_{}[D(\mathcal{C}^{(i)})=1] } \geq
    \eps/n.
  \end{align*}
  Expanding the definition of $\mathcal{C}^{(i)}$, 
  \begin{align*}
    | \Pr&[D\(\com_1,\dots,\com_{n-i},\com_{n-i+1},\com_{2n-i+2},\dots ,
    \com_{2n}\) =1
    ] -   \\
    & \Pr[D(\com_1,\dots,\com_{n-i},\com_{2n-i+1},\com_{2n-i+2},\dots,
    \com_{2n})= 1] | \geq \eps/n.
  \end{align*}
  
  At this point, it follows that there exists $D'$ that distinguishes 
  between
  $\com_{n-i+1}$ and $\com_{2n-i+1}$. Namely, for $x=n-i+1$ and 
  $y=2n-i+1$, it
  holds that
  \begin{align*}
    \abs{ \Pr[D'(\Com(x,\U_{n}))=1] - \Pr[D'(\Com(y,\U_{n}))=1] } \ge 
    \eps/n,
  \end{align*}  
  as required.
\end{proof}
\newpage
\section{Secret-Sharing Zoo}
A summary of the known relations between secret-sharing and other objects.

\newcommand{\sign}{{$^\dagger$}}
\newcommand{\signa}{{$^\ddagger$}}

\begin{figure}[h!]\label{fig:zoo}
\centering
\tikzset{
	%Define standard arrow tip
	>=stealth',
	%Define style for boxes
	punkt/.style={
		rectangle,
		rounded corners,
		draw=black, very thick,
		text width=6.5em,
		minimum height=2em,
		text centered},
	% Define arrow style
	pil/.style={
		->,
		thick,
		shorten <=2pt,
		shorten >=2pt,}
}

	\begin{tikzpicture}[node distance=1cm, auto,]
	%nodes
	\node[] (dummy) {};
	
	\node[punkt,right=6cm of dummy,ultra thick,draw=red] (ssnp)
        {Secret-sharing for $\NP$};
	
	\node[punkt, below=1.5cm of ssnp](ssp) {Secret-sharing for $\PP$}
  	 edge[pil,<-, bend right=25] node[auto] {} (ssnp);
	
         \node[punkt, below=6cm of dummy](ssmp) {Secret-sharing for monotone
           circuit in $\PP$} edge[pil,<-, bend right=25] node[auto] {} (ssp);

         \node[punkt,above=3cm of dummy] (io) {iO for $\PP$};
	
	\node[punkt,left=2cm of dummy] (witness) {Witness encryption}
  	 edge[pil,<-, bend left=35] node[auto] {\cite{GargGH0SW13}} (io)
   	 edge[pil,<-, bend right=15] node[below] {\cite{GargGSW13}} (ssnp)
   	 edge[pil,->, ultra thick, bend left=15,draw=red] node[above] {This work           \sign} (ssnp); 
	
	\node[punkt,below=1.5cm of witness] (oblivious) {Oblivious transfer}
  	 edge[pil,<-, bend right=15] node[below,rotate=10] {[Rudich90] \sign} (ssnp)
   	 edge[pil,<-, bend left=25] node[auto] {\cite{SahaiW14} \sign} (witness);
	
	\node[punkt,below=3.2cm of dummy](owf) {One-way functions}
  	 edge[pil,->, bend right=25] node[auto] {[Yao89]} (ssmp)
   	 edge[pil,<-, bend left=20] node[right] {\cite{KomargodskiMNPRY14}
           \signa} (io); 
  
	\end{tikzpicture}
	\vspace{1em}
        \caption{Secret-sharing Zoo. A {\sign} mark on a line denotes the fact
          that the reduction between the primitives relies also on the existence
          of one-way fucntions. iO stands for indistinguishability
          obfusaction. A {\signa} mark on the line from iO for $\PP$ to one-way
          functions means that the reduction assumes a worst-case complexity
          assumption, namely, that $\NP \not\subseteq \ioBPP$ (see
          \cite{KomargodskiMNPRY14} for more information). [Yao89] and
          [Rudich90] are unpublished.} \label{fig:M1}

\end{figure}

%%% Local Variables:
%%% mode: latex
%%% TeX-master: t
%%% End:

%\input{SSFromReduction}

\end{document}